\documentclass[final,leqno]{siamltexmm}

\usepackage[T1]{fontenc}
\usepackage[utf8x]{inputenc}
\usepackage[english]{babel}
 
 
\usepackage[sfdefault,lf]{carlito}
 

\usepackage{amsmath}
\usepackage{graphicx}
\usepackage{booktabs}
\usepackage[colorinlistoftodos]{todonotes}

\usepackage{mathrsfs}
\usepackage{amssymb}
\usepackage{subcaption}
\usepackage{color}
\usepackage{algorithm}
\usepackage[noend]{algpseudocode}
\def\BState{\State\hskip-\ALG@thistlm}
\usepackage[rightcaption]{sidecap}
\usepackage{wrapfig} 

\newcommand{\smat}[4]{\begin{pmatrix} #1 &#2 \\ #3 &#4 \end{pmatrix}}
\newcommand{\svec}[2]{\begin{pmatrix} #1 \\ #2 \end{pmatrix}}

\newcommand{\norm}[1]{\left\lVert#1\right\rVert}
\newcommand{\off}[1]{}

\DeclareMathOperator{\Div}{div}
\DeclareMathOperator{\Grad}{grad}
\DeclareMathOperator*{\argsup}{argsup}

\newtheorem{conj}{Conjecture}[section]

\newcommand{\GG}[1]{{\color{blue} #1}}
\newcommand{\SB}[1]{{\color{purple} #1}}

\newtheorem{rem}{Remark}[section]

\title{Adaptive Anisotropic Total Variation - A Nonlinear Spectral Analysis}
\author{Shai Biton \and  Guy Gilboa} 

\begin{document}
\maketitle
\newcommand{\slugmaster}{\slugger{siims}{xxxx}{xx}{x}{x--x}}

\begin{abstract}
A fundamental concept in solving inverse problems is the use of regularizers,  which yield more physical and less-oscillatory solutions. Total variation (TV) has been widely used as an edge-preserving regularizer. However, objects are often over-regularized by TV, becoming blob-like convex structures of low curvature. This phenomenon was explained mathematically in the analysis of Andreau et al. They have shown that a TV regularizer can spatially preserve perfectly sets which are nonlinear eigenfunctions of the form $\lambda u \in \partial J_{TV}(u)$, where $\partial J_{TV}(u)$ is the TV subdifferential. For TV, these shapes are convex sets of low-curvature.  
A compelling approach to better preserve structures is to use anisotropic functionals, which adapt the regularization in an image-driven manner, with strong regularization along edges and low across them. 
This follows earlier ideas of Weickert on anisotropic diffusion, which do not stem directly from functional minimization. Adaptive anisotropic TV (A$^2$TV) was successfully used in several studies in the past decade. However, until now there is no theory formulating the type of structures which can be perfectly preserved. In this study we address this question.

We rely on a recently developed theory of Burger et al on nonlinear spectral analysis of one-homogeneous functionals. We have that eigenfunction sets, admitting $\lambda u \in \partial J_{A^2TV}(u)$, are perfectly preserved under A$^2$TV-flow or minimization with $L^2$ square fidelity. We thus investigate these eigenfunctions theoretically and numerically. We prove non-convex sets can be eigenfunctions in certain conditions and provide numerical results which characterize well the relations between the degree of local anisotropy of the functional and the admitted maximal curvature. A nonlinear spectral representation is formulated, where shapes are well preserved and can be manipulated effectively. Finally, examples of possible applications related to shape manipulation and guided regularization of medical and depth data are shown.
  
\off{  
In image processing, over the years, we encountered many transforms such as fourier, cosine, wavelets and more. 
But while most of them either has no local knowledge in the frequency realm, and those which does, doesn't incorporate any data other than edges related; The TV-Transform and its spectrum gave the object and texture processing fields a great boost, and us the opportunity to research textures and shapes within the image in the image realm itself. Now, using the transform, we can separate textures or shapes to bands, and amplify or diminish them. But there is a downside to the TV Transform; While its ability to process textures is fantastic, its ability to process shapes is rather limited to convex and low curvature ones. In the real world, obviously, we have all kind of shapes and this is the problem this article is trying to solve. To do just that, we took an even more flexible functional, A$^2$TV-Adapted Anisotropic Total Variation, and plugged it to the transform only to create the A$^2$TV-Transform in which we believe to perfect the original transform to better process both texture and shapes. In this article we will show the theory behind the scenes and some cool applications of the transform.
}
\end{abstract} 


\section{Introduction}
\label{sec_Intro}

Regularization is a fundamental tool in solving ill-posed inverse problems. A classical edge-preserving regularizer is the total-variation functional. It has been used for solving various image processing and computer vision problems, such as denoising \cite{rof92}, segmentation \cite{iccv2015TVsegmentation}, optical flow \cite{zach2007TVopticalflow}, depth processing \cite{suwajanakorn2015depth} and more. See \cite{BurgerOsher2013_TV_Zoo,chambolle2010introduction} for detailed overviews addressing the properties of total variation and related applications.
%

A significant property of regularizers is the type of spatial shapes they can regularize well, with minimal distortions caused by the regularization process.
One such distortion is contrast change. It can be shown that any minimization of one-homogeneous functional, such as TV, with an $L^2$ square fidelity term (e.g. the ROF model \cite{rof92}) - reduces contrast for any calibrable set (or eigenfunction of the associated subdifferntial) \cite{Burger16}. Similarly this is true for gradient descent with respect to the regularizer, such as TV flow. Contrast reduction, however, can be avoided by using various methods such as inverse-scale-space \cite{iss}, nonlinear spectral 
filtering \cite{Gilboa_spectv_SIAM_2014}, employing other norms as fidelity terms \cite{Aubert_Aujol2008}, covariant least-squares \cite{deledalle2017clear} or adding a Bregman distance debiasing term \cite{brinkmann2017bias}.
A more significant byproduct of regularization is spatial smoothing and distortions of significant features of the signal.  In non-linear processing, these distortion effects are not simple to quantify and analyze. However, it is possible to characterize precisely which shapes maintain their geometrical structure, up to a contrast change, in regularization with $L^2$  fidelity or when gradient descent is applied. 

For the TV functional, this topic has been thoroughly investigated by Andreau and colleagues \cite{tvFlowAndrea2001}. It was shown that convex sets, which are smooth enough, termed \emph{calibrable sets}, preserve their shapes throughout gradient descent with respect to the TV functional (TV flow). A detailed presentation of these concepts and an outline of the main proofs are given below. Our work aims at extending the geometrical understanding to adaptive anisotropic TV. 
In \cite{Burger16} the issue of shape-preservation was generalized to any
absolutely one-homogeneous functional, where $J(\alpha u) = |\alpha|J(u)$, $\alpha \in \mathbb{R}$.  Denoting the subdifferential of $J$ at $u$ by $\partial J(u)$, 
it has been shown that nonlinear eigenfunctions, admitting 
$$ \lambda u \in \partial J(u), \qquad \lambda\in\mathbb{R},$$
preserve their structure through a gradient flow with respect to $J$ or in a minimization with $L^2$ square fidelity. Conditions for shape preservation of two structures, based on generalized inner-products were stated in \cite{gilboa2017semi} and \cite{schmidt2018inverse}.
In \cite{nossek2018flows} it was shown numerically that such eigenfunctions are very stable under noise and that one achieves state-of-the-art denoising performance in these cases.

For the TV case, the class of preserved shapes is quite restricted, including only convex rounded disk-like shapes, with low curvature. This implies that almost all geometrical shapes will be distorted during regularization. To allow better adaptivity of the regularizer to the signal at hand, several models were proposed to spatially change the degree and direction of the regularization at each point (for applications such as inpainting \cite{ex_Inpainting_example_duan2017introducing}, segmentation \cite{ex_Segmentation_example_hong2017multi}, filtering \cite{ex_filtering_example_krajsek2010diffusion}  and restoration \cite{ex_Denoising_example_min2017compressive}).
A canonical way to formalize this is by the following functional 
\begin{equation}
\label{eq_IntroductionAATVFunctional}
\mathcal{J} (u) = \int_\Omega \sqrt{\nabla u^T(x)R(x)\nabla u(x)} dx,
\end{equation}
where $R(x)\in\mathbb{R}^{2\times 2}$ is spatially adapted, such that the regularization is strong along the edges and weak across them. We term this functional as \emph{Adaptive Anisotripic Total Variation} (A$^2$TV), 
precise definitions are given in Section \ref{sec:AATV}. We add the term \emph{Adaptive} as \emph{Anisotropic TV} is often referred to evaluating the gradient magnitude in a rotationally varying manner (such as with $\ell^1$ metric) but with no spatial adaptive, image driven, changes. This functional was analyzed theoretically by Grasmair \& Lenzen  \cite{grasmair2010anisotropic}, where existence and uniqueness in several settings were shown. However, it is still an open problem -- which shapes such an energy can precisely preserve?  We attempt to provide first analytical and experimental answers in this work.

Decomposing a signal into its essential components is a long lasting problem in signal and image processing. In \cite{gilboa2013spectral,Gilboa_spectv_SIAM_2014}
a nonlinear spectral decomposition based on the TV functional was proposed. In \cite{Burger16} the approach was extended to absolutely one-homogeneous functionals. In specific settings, such as discrete one dimension, it was shown that one obtains a precise decomposition into nonlinear eigenfunctions. Moreover, the spectral components, which are a difference of two eigenfunctions, are orthogonal to each other. We will present here the 
A$^2$TV spectral framework and show its applicability.

\subsection{Outline and Main Contributions}
Our aim in this paper is to address the problem of shape preserving structures of the A$^2$TV functional. We extend the definitions of nonlinear eigenfunctions and calibrable sets to this model. In order to do it in a clear and self-containing manner, we provide in Section \ref{sec:tv_efs} a summary related to calibrable sets and eigenfunctions of TV. 
As the topic is not trivial to non-experts, this section may be of some tutorial value. In Section \ref{sec:AATV} the A$^2$TV functional is presented and its eigenfunctions are analyzed. The associated spectral framework is presented in Section \ref{sec:spectral}. Ways to compute numerically the minimizations and flows are provided in Section \ref{sec:numerical}. Experiments illustrating the degree of shape preservation on toy examples are presented in Section \ref{sec:exp}. 
Finally, we give examples of possible uses based on the functional in a regularization context and through spectral filtering. These include depth inpainting, PET/CT Fusion and image manipulation.

The main contributions of this work are: 
\begin{enumerate}
\item We present A$^2$TV in a dual formulation setting, based on locally rotated and scaled gradient and divergence operators. Thereby providing a way to transfer some of the theory established for TV to the A$^2$TV functional.
\item The A$^2$TV is analyzed from a geometrical viewpoint. Two main properties of   eigenfunctions induced by the functional are investigated. We provide a proof for the convexity condition, showing that a set which is an eigenfunction 
is not required to be convex, unlike the classical result for isotropic TV \cite{TVFlowInRN}. We show how a parameter which controls the anisotropy allows more complex nonconvex structure as the degree of anisotropy increases. Our condition turns out to coincide with a 
non-convexity measure for shape descriptors defined in \cite{peura1997efficiency}.
\item We suggest a conjecture related to the curvature bound, which stems from methodological numerical experiments.
\item We present for the first time the attributes of spectral A$^2$TV. 
It is shown how nonlinear spectral filtering decomposes an image into textures and  objects parts in a new way, allowing objects to retain detailed geometry. We show the advantages of the A$^2$TV transform over the TV transform with respect to preserving well shapes of complex non-convex structures.
\item We suggest some applications which use the functional. The analysis and understanding of eigenfunctions of A$^2$TV allows new insights on the regularization functional and better employment of it. 
\end{enumerate}


\subsection{Some Notation}
Let $Q$ be an open subset in $\mathbb{R}^N$. A function $u\in L^1(Q)$ with gradient $Du$ (in the sense of distribution) is a Radon measure with finite total variation in $Q$ is called a function of bounded variation and will be denoted by $u\in BV(Q)$.
A characteristic function of a set $E$ will be denoted by $\chi_E$.
For a finite perimeter $E$, one can define the essential boundary $\partial^*E$ as a countably $(N-1)$ rectifiable with finite $\mathscr{H}^{N-1}$.
We define $\theta(x,y)\in L^\infty_{|y|}(\mathbb{R}^N)$ as the density of $(x,y)$ with respect of $|y|$,
$$(x,y)(B) = \int_B \theta(x,y)d|y|,$$ 
for any Borel set $B\subseteq\mathbb{R}^N$.
In particular, if $E$ is bounded and has finite perimeter in $\mathbb{R}^N$, it follows that,
\begin{equation*}
J_{TV}(\chi_E) = \langle\Div\xi(x),\chi_E(x)\rangle = \langle\xi(x),-D\chi_E(x)\rangle = \int_{\partial^*E}\theta(\xi(x),-D\chi_E(x))d\mathscr{H}^{N-1}.
\end{equation*}

\subsection{Convexity measure}
In shape analysis there has been comprehensive research related to characterizing shapes by a variety of measures and geometric descriptors.
In \cite{peura1997efficiency} Peura and Iivarinen proposed a group of fundamental shape descriptors. Specifically, they proposed the following \emph{convexity measure} of a set $C$:
\begin{equation}
\label{eq:conv_C}
\mathcal{H}(C) := \frac{P(Conv(C))}{P(C)},
\end{equation}
where $P(C)$ is the perimeter of $C$ and $Conv(C)$ is the convex-hull of the set $C$. This measure has certain desired property as a shape descriptor.
It is in the range $(0,1]$ where it is $1$ if and only if the set $C$ is convex. It is also translation and rotation invariant. We will later see how this measure is directly related to characterizing eigenfunctions of the A$^2$TV functional. 

\newpage
\section{The Theory of TV Eigenfunctions}
\label{sec:tv_efs}
\subsection{Functional Definition}
The TV functional is defined by,
\begin{equation}
\label{eq_TVStrongDef}
J_{TV}(u) = \sup_{\xi \in C_c^\infty, \|\xi\|_\infty\leq 1}\int_\Omega u(x)\Div\xi(x)dx,
\end{equation}
on a domain $\Omega$, for $u\in BV(\Omega)$. For smooth functions $u$, this reduces to,
\begin{equation}
\label{eq_TVweakDef}
J_{TV}(u) = \int_\Omega |\nabla u(x)|dx.
\end{equation}

\subsection{Regularization based-on TV} 
In this paper we focus on gradient descent flows with respect to the regularizing functional. In the case of TV, this is referred to as \emph{TV flow}, creating an inherent nonlinear, edge-preserving, scale-space. 
It is closely related to the ROF model \cite{rof92}, where the regularization is coupled with $L^2$ square fidelity. For completeness, both are defined below.

\subsubsection{The TV Flow} 
The TV flow regularizes an input signal $f(x)$ by performing a gradient descent with respect to TV. 
We denote by $u(t;x)$ the solution of the following flow (with a time parameter $t$), 
\begin{equation}
\label{eq_TVflow}
\begin{array}{lcl}
\frac{\partial u}{\partial t} = -\Div \left( \xi(x) \right) &on &(0,\infty)\times\Omega, \\\\
\frac{\partial u}{\partial n} = 0  &in &(0,\infty)\times\Omega, \\\\
u(0;x) = f(x) &in &x\in\Omega,
\end{array}
\end{equation}
where $\xi$ admits $ \argsup_{\xi \in C_c^\infty, \|\xi\|_\infty\leq 1}\int_\Omega u(x)\Div\xi(x)dx$.
Hence $ \Div \left( \xi(x) \right) \in \partial J_{TV}(u)$ is a subgradient element and the flow is a gradient descent with respect to TV. Based on the ``chain rule'' for the differentiation of functionals (see Lemma 3.3 of Brezis \cite{brezis1973ope}), denoting $p= \Div \left( \xi(x) \right)$, we obtain the monotonic decrease of $J_{TV}(u(t))$ over time by
$$ \frac{d}{dt}J(u(t))=\langle p, \frac{\partial u}{\partial t}\rangle = -\| p\|^2 \le 0, \qquad a.e.$$
\subsubsection{ROF Model} 
The ROF model \cite{rof92} is the classical way of using TV to regularize an image $f$. The resulting image $u$ is a minimizer of the following energy,  
\begin{equation}
\label{eq_TVROFModel}
\mathscr{E}(u)= t \cdot J_{TV}(u) + \frac{1}{2}\|u-f\|_{L_2(\Omega)}^2.
\end{equation}
There is a strong resemblance between the TV flow and the ROF model. In the discrete 1D case it was shown that identical solutions are obtained (see e.g. \cite{steidl2004equivalence,Burger16}).
In this section we examine shapes which preserve their spatial structure (up to a contrast change) in a gradient flow or regularization. This type of functions admit a nonlinear eigenvalue problem, with respect to the subgradient of the functional. We refer to them as eigenfunctions.
\subsection{TV Eigenfunctions}
The theory of total variation regularization in $\mathbb{R}^2$   characterizes the functions for which the TV regularizer is well adapted to. Those are \emph{calibrable sets}, which are a specific family of \emph{eigenfunction}. Let us first define a nonlinear eigenfunction induced by a functional.

\begin{definition}[Eigenfunction induced by a functional $J$]
\label{def:ef}
A function $f$ is an eigenfunction induced by a functional $J$ if it admits the following nonlinear eigenvalue problem,
\begin{equation}
\label{eq_TVef}
\lambda f \in \partial J(f),
\end{equation}
where $\lambda$ is the corresponding eigenvalue.
\end{definition}
%
%
%
%
%
%
A deep theoretical research  related to the above eigenvalue problem has been conducted in \cite{TVFlowInRN} and \cite{bellettini2002total} for the TV functional. 
Here we outline the essential results and provide proofs which are crucial for better understanding the advantages of the A$^2$TV functional, as explained in Section \ref{sec:AATV}.
It was shown that eigenfunctions preserve their shape and only lose contrast under the TV flow. In fact, the loss of contrast is linear, where the TV flow \eqref{eq_TVflow} has the following analytic solution,
\begin{equation}
\label{eq_TVflowSolution}
u(t) = (1-t\lambda)^+f,
\end{equation}
for initial conditions $f$ admitting \eqref{eq_TVef}, $J=J_{TV}$. 
Moreover, these studies gave a precise geometrical characterization when $f$ is an indicator of a set $C\subset\mathbb{R}^2$, which is a bounded set of finite perimeter. The theory requires that the set $C$ be convex. A second condition (we refer to as a curvature condition) consists of a bound on the maximal curvature on the boundary of the set,
\begin{equation}
\label{eq_curvature}
\max_{\forall x\in \partial^*C}\kappa(x)\leq \frac{P(C)}{|C|},
\end{equation}
for a characteristic set $C$, where $P(C)$ is the perimeter of $C$ and $|C|$ is its area. In the section below we summarize the essential arguments and analysis for reaching the convexity condition and Eq. \eqref{eq_curvature}. This background will serve us in better understanding our new analysis for the adapted-anisotropic TV case. 

\subsubsection{The convexity condition}
In order to prove the convexity condition, let us first introduce several definitions. Let the scalar $\lambda_C$ be defined by,
\begin{equation}
\label{eq_lambdaTV}
\lambda_C = \frac{P(C)}{|C|}.
\end{equation}
We can now define a calibrable set.
\begin{definition}[Calibrable Set]
\label{def:CalibrableSets}
We say that $C$ is a $-$calibrable set if there exists a vector field $\xi^-_C:\mathbb{R}^2\to\mathbb{R}^2$ with the following properties:
\begin{enumerate}
\item $\xi^-_C\in L^2_{loc}(\mathbb{R}^2;\mathbb{R}^2)$ and  $\Div(\xi^-_C)\in L^2_{loc}(\mathbb{R}^2)$.
\item $|\xi^-_C|\leq 1$ almost everywhere in $C$.
\item $\Div(\xi^-_C)$ is constant on $C$.
\item $\theta(\xi^-_C,-D\chi_C)(x)=-1$ for $\mathscr{H}^1$-almost every $x\in\partial^*C$.
\end{enumerate}
\end{definition}
We say that $C$ is $+$calibrable if there exists a vector field $\xi^+_C:\mathbb{R}^2\to\mathbb{R}^2$ satisfying properties (1),(2),(3) and such that $\theta(\xi^+_C,-D\chi_C)(x)=1$.

Basically, condition (4) means that $\xi^-_C$ ($\xi^+_C$) is identical to the outer (inner) normalized gradient of $\chi_C$ for almost every $x\in\partial^*C$.\\

\begin{rem}
Let $C\subset\mathbb{R}^2$ be a bounded set of finite perimeter. Assume that $C$ is $-$calibrable and that $\mathbb{R}^2\diagdown C$ is $+$calibrable. Define
\begin{equation}
\label{xi_def}
\xi_C := -\Bigg\{
\begin{array}{lcl}
\xi^-_C                        & on &C\\
\xi^+_{\mathbb{R}^2\diagdown C}  & on &\mathbb{R}^2\diagdown C,
\end{array}
\end{equation}
 
then, $\xi_C \in L^\infty(\mathbb{R}^2;\mathbb{R}^2)$ and $\Div(\xi_C )\in L^\infty(\mathbb{R}^2)$.\\
\end{rem}

In \cite{TVFlowInRN} it was shown that for calibrable sets 
 \begin{equation}
 \label{eq_divxi}
\Div(\xi_C) = \lambda_C\chi_C = \Bigg\{
\begin{array}{lcl}
\lambda_C    & on &C\\
0  & on &\mathbb{R}^2\diagdown C ,
\end{array}
\end{equation}
where $\lambda_C$ is defined by Eq. \eqref{eq_lambdaTV}.
We can now state the following convexity condition.
\begin{theorem}[The Convexity Condition]
\label{Theorem_Convex}
Let $C\subset\mathbb{R}^2$ be a bounded set of finite perimeter which is $-$calibrable such that $\mathbb{R}^2\diagdown C$ is $+$calibrable. Then,
\begin{enumerate}
\item $C$ is convex. 
\item $\forall D\subseteq C$, we have $\lambda_D \geq\lambda_C$.
\end{enumerate}
\end{theorem}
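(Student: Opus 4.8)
The plan is to build both assertions on the single structural identity recorded in \eqref{eq_divxi}, namely that the calibration field satisfies $\Div(\xi_C)=\lambda_C\chi_C$ while $|\xi_C|\le 1$ almost everywhere (property (2) of Definition~\ref{def:CalibrableSets}). Since the Remark guarantees $\xi_C\in L^\infty(\mathbb{R}^2;\mathbb{R}^2)$ with $\Div(\xi_C)\in L^\infty(\mathbb{R}^2)$, I can legitimately pair $\xi_C$ against the gradient of a characteristic function through the generalized Gauss--Green (Anzellotti) formula already used in the Notation section: for any bounded set of finite perimeter $E$, $\int_E\Div(\xi_C)\,dx=\int_{\partial^*E}\theta(\xi_C,-D\chi_E)\,d\mathscr{H}^1$, where the normal trace obeys $|\theta(\xi_C,-D\chi_E)|\le 1$ because $|\xi_C|\le 1$. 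Both parts then follow by testing this identity on a well-chosen domain $E$ and invoking $|\theta|\le 1$.

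For the second assertion I take $E=D$ with $D\subseteq C$. On the left, $\Div(\xi_C)=\lambda_C$ throughout $D$ since $\Div(\xi_C)=\lambda_C\chi_C$ and $D\subseteq C$, so $\int_D\Div(\xi_C)\,dx=\lambda_C|D|$. On the right, $\theta(\xi_C,-D\chi_D)\le 1$ gives $\int_{\partial^*D}\theta\,d\mathscr{H}^1\le P(D)$. Combining yields $\lambda_C|D|\le P(D)$, which is exactly $\lambda_C\le P(D)/|D|=\lambda_D$; equivalently, $C$ minimizes the ratio $P(\cdot)/|\cdot|$ from \eqref{eq_lambdaTV} among its own subsets. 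This is the short step, and I would present it first as a warm-up for the divergence estimate reused below.

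For convexity I test the same identity on $E=Conv(C)$. Because $C\subseteq Conv(C)$ and $\Div(\xi_C)$ vanishes off $C$, the left-hand side is again $\lambda_C|C|=P(C)$, while $|\theta|\le 1$ bounds the right-hand side by $P(Conv(C))$; hence $P(C)\le P(Conv(C))$. On the other hand, passing to the convex hull never increases perimeter, so $P(Conv(C))\le P(C)$. The two inequalities force $P(Conv(C))=P(C)$, i.e. the convexity measure \eqref{eq:conv_C} satisfies $\mathcal{H}(C)=1$, and by its characterization (equality exactly for convex sets) $C$ is convex. If one prefers a self-contained argument in place of invoking \eqref{eq:conv_C}, the equality $P(Conv(C))=P(C)$ can be unpacked directly: the portion of $\partial\,Conv(C)$ not meeting $\partial^*C$ is a disjoint union of straight chords, each no longer than the arc of $\partial^*C$ it subtends, so perimeter equality forces every such arc to coincide with its chord, leaving no concavities.

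The main obstacle is analytic rather than geometric: making the Gauss--Green step rigorous when $\xi_C$ is merely an $L^\infty$ field, so that $\xi_C\cdot\nu_E$ has no pointwise meaning and must be read as the measure-theoretic normal trace $\theta(\xi_C,-D\chi_E)$, and checking that the integrability supplied by the Remark ($\xi_C\in L^\infty$, $\Div(\xi_C)\in L^\infty$) indeed licenses the pairing on both $D$ and $Conv(C)$. The remaining geometric inputs, that the convex hull does not increase perimeter and that perimeter equality characterizes convexity, are classical; the latter is precisely the content of the convexity measure \eqref{eq:conv_C} recalled above.
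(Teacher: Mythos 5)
Your argument is correct and follows essentially the same route as the paper: you pair the calibration identity $\Div(\xi_C)=\lambda_C\chi_C$ and the bound $|\xi_C|\le 1$ through the Gauss--Green / normal-trace formula, test on subsets $D\subseteq C$ for part 2, and test on the convex hull for part 1 --- the paper merely phrases part 1 as $P(C)\le P(D)$ for every superset $D$ and then takes $D=Conv(C)$ for a contradiction, whereas you go straight to the hull and read off $\mathcal{H}(C)=1$. Both versions rest on the same classical fact that passing to the convex hull does not increase perimeter (used without comment in the paper as well, and strictly valid only for connected $C$), so there is nothing to add.
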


\begin{proof}
Let $\xi_C \in L^\infty(\mathbb{R}^2;\mathbb{R}^2), \| \xi_C \|\leq 1$ be the vector field defined by \eqref{xi_def}.
Let $D$ be a set of a finite perimeter. Using the fact that $\lambda_C\chi_C =\Div(\xi_C )$, we have,
\[
\begin{aligned}
\lambda_C |C\cap D| 
&= \lambda_C\int_{\mathbb{R}^2}\chi_C\chi_D 
= \int_{\mathbb{R}^2}\lambda_C\chi_C\chi_D 
= \int_{\mathbb{R}^2}\Div(\xi_C)\chi_D \\
&= \int_D \Div(\xi_C)
=\underbrace{ \oint_D \xi_C\cdot \hat{\nu}_D}_{\substack{
\textrm{Divergence Theorem}\\ \hat{\nu}_D \textrm{ outer normal of $D$}}}
\leq \underbrace{\oint_D dl}_{\|\xi_C\|\leq 1,\hspace{1mm}\|\hat{\nu}_D\|=1} = Per(D)
\end{aligned}.
\]
Hence, $\forall D\supseteq C$, we have $Per(C) = \lambda_C |C| = \lambda_C |C\cap D|$ and therefore $Per(C) \leq Per(D)$. Therefore  $C$ must be convex
(If $C$ is not convex then we can choose $D$ as its convex hull  which has a smaller perimeter).
While $\forall D\subseteq C$, we have $\lambda_C |D| = \lambda_C |C\cap D|$ yielding $\lambda_C\leq\lambda_D$.
\end{proof}

\subsubsection{The curvature condition}

For the curvature condition we shall define the following, 
\begin{definition}[$\mathscr{G}_\lambda$]
Given $\lambda\in\mathbb{R}$ we define the functional $\mathscr{G}_\lambda$ as \\
\begin{equation}
\mathscr{G}_\lambda(D) := P(D) -\lambda|D|,\hspace{1cm}  D\subseteq \mathbb{R}^2,\hspace{1cm} D \hspace{1mm}\text{of finite perimeter}.
\end{equation}
\end{definition}

The functional $\mathscr{G}_\lambda$ has the following properties.
\begin{proposition}
\label{Prop_G}
Let $C,D\subset\mathbb{R}^2$ be bounded sets of finite perimeter which are $-$calibrable and such that $\mathbb{R}^2\diagdown C$ and $\mathbb{R}^2\diagdown D$ is $+$calibrable. Also 
$\lambda_C\chi_C =\Div(\xi_C ), \hspace{1mm}\lambda_C = P(C)/|C|$ on $C$ and $\lambda_D\chi_D =\Div(\xi_D ), \hspace{1mm} \lambda_D = P(D)/|D|$ on $D$.
Then,
\begin{enumerate}
\item $\forall D\subseteq C$, the set $C$ minimizes $\mathscr{G}_{\lambda_C}(D) $. or,   $\mathscr{G}_{\lambda_C}(D)\geq\mathscr{G}_{\lambda_C}(C) $.
\item $\forall D\subseteq C$,  $\mathscr{G}_{\lambda_C}(D)\geq 0$.
\item $\mathscr{G}_{\lambda_C}(C)=0$.
\end{enumerate}
\end{proposition}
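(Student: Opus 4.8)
The plan is to notice that the three assertions are tightly linked: item~3 is immediate, item~2 carries all the content, and item~1 then follows formally from the other two, so I would establish them in the order 3, 2, 1 rather than as listed. Item~3 is a one-line computation: by the definition $\lambda_C = P(C)/|C|$ in \eqref{eq_lambdaTV} we have $\lambda_C|C| = P(C)$, whence $\mathscr{G}_{\lambda_C}(C) = P(C) - \lambda_C|C| = 0$ straight from the definition of $\mathscr{G}_\lambda$.

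The core is item~2, for which I would reuse the chain of (in)equalities already appearing in the proof of Theorem~\ref{Theorem_Convex}. Let $\xi_C$ be the calibration field of \eqref{xi_def}, so that $|\xi_C|\leq 1$ almost everywhere and $\Div(\xi_C) = \lambda_C\chi_C$ by \eqref{eq_divxi}. For any finite-perimeter $D\subseteq C$, integrating the divergence over $D$ and applying the Gauss--Green formula gives
\[
\lambda_C|D| = \int_D \lambda_C\chi_C\,dx = \int_D \Div(\xi_C)\,dx = \oint_{\partial^* D}\xi_C\cdot\hat{\nu}_D\,d\mathscr{H}^1 \leq P(D),
\]
the last step using $|\xi_C|\leq 1$ and $|\hat{\nu}_D| = 1$ pointwise. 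Rearranging is exactly $\mathscr{G}_{\lambda_C}(D) = P(D) - \lambda_C|D|\geq 0$. The inclusion $D\subseteq C$ is used only in the first equality, where $\chi_C\equiv 1$ on $D$ turns $\int_D\lambda_C\chi_C$ into $\lambda_C|D|$; for a general $D$ one recovers instead $\lambda_C|C\cap D|\leq P(D)$, as in Theorem~\ref{Theorem_Convex}. Note that only the calibration of $C$ is needed here, not that of $D$.

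Item~1 is then purely formal: items~2 and~3 together give $\mathscr{G}_{\lambda_C}(D)\geq 0 = \mathscr{G}_{\lambda_C}(C)$ for every $D\subseteq C$, so $C$ minimizes $\mathscr{G}_{\lambda_C}$ over subsets. The one place demanding care --- the main obstacle --- is the rigorous Gauss--Green step, since $\xi_C$ is merely an $L^\infty$ field with divergence in $L^\infty$ and $D$ is only of finite perimeter. The boundary term must be interpreted through the normal-trace pairing $\theta(\xi_C,-D\chi_D)$ introduced in the notation section, and one must check that this pairing is still controlled by $P(D)$. Granting that, the bound $\xi_C\cdot\hat{\nu}_D\leq 1$ and hence the whole proposition follow immediately.
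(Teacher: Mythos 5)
Your proof is correct, but it routes around the paper's argument in a small but genuine way. The paper proves item~2 by invoking the calibration field $\xi_D$ of the subset $D$ itself --- writing $P(D)=\int_D \Div(\xi_D)=\lambda_D|D|$ --- and then citing part~2 of Theorem~\ref{Theorem_Convex} ($\lambda_D\geq\lambda_C$ whenever $D\subseteq C$) to get $\mathscr{G}_{\lambda_C}(D)=(\lambda_D-\lambda_C)|D|\geq 0$, with $\mathscr{G}_{\lambda_C}(C)=0$ obtained the same way from $\xi_C$. You instead apply the divergence theorem to $C$'s field $\xi_C$ over the region $D$, which is precisely the computation buried inside the proof of Theorem~\ref{Theorem_Convex}; so in substance both arguments rest on the same estimate $\lambda_C|D|\leq P(D)$. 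What your version buys is that it never uses the calibrability of $D$: the conclusion $\mathscr{G}_{\lambda_C}(D)\geq 0$ holds for every finite-perimeter subset $D\subseteq C$, not only calibrable ones, and the proof is self-contained rather than leaning on the statement of the earlier theorem. What the paper's version buys is brevity once that theorem is in hand. Your ordering (3, then 2, then 1) and the remark that item~1 is purely formal from the other two make explicit what the paper leaves implicit, and your caveat about interpreting the Gauss--Green boundary term through the normal-trace pairing $\theta(\xi_C,-D\chi_D)$ flags a technical point that the paper's own proof also glosses over.
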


\begin{proof}
Since $D\subseteq C$, by Theorem \ref{Theorem_Convex}, it holds that $\lambda_D\geq\lambda_C$, therefore,
\[
\begin{aligned}
\mathscr{G}_{\lambda_C}(D) &=  P(D) -\lambda_C|D|
= \int_D (\Div(\xi_D)-\lambda_C)
= \int_D (\lambda_D-\lambda_C)
\geq \int_D (\lambda_C-\lambda_C) = 0  \\
&= \int_C (\Div(\xi_C)-\lambda_C)
= P(C) -\lambda_C|C|
=\mathscr{G}_{\lambda_C}(C).
\end{aligned}
\]
\end{proof}

Finally, the following curvature condition can be stated.
\begin{theorem}[The Curvature Condition]
\label{Theorem_Curv}
Let $C\subset\mathbb{R}^2$ be a bounded set of finite perimeter which is $-$calibrable and such that $\mathbb{R}^2\diagdown C$ is $+$calibrable. Then,
the following holds,
\begin{equation}
\label{eq:max_curvature_tv}
\max_{\forall x\in \partial^*C}\kappa(x)\leq\lambda_C,
\end{equation}
where $\kappa(x)$ is the curvature at point $x$ on the boundary of $C$.
\end{theorem}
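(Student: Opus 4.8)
The plan is to derive the curvature bound as the Euler--Lagrange (first variation) inequality of the constrained minimization encoded in Proposition \ref{Prop_G}. The crucial input is that $C$ minimizes $\mathscr{G}_{\lambda_C}(D) = P(D) - \lambda_C|D|$ over all subsets $D\subseteq C$, attaining the value $\mathscr{G}_{\lambda_C}(C)=0$. Because the competition is restricted to subsets of $C$, we only have access to \emph{inward} boundary perturbations, and this one-sidedness is exactly what turns the usual ``constant mean curvature'' equality into the desired inequality $\kappa\leq\lambda_C$.

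First I would fix a point $x_0\in\partial^*C$ and a nonnegative test function $\phi\geq 0$ supported on a small boundary neighborhood of $x_0$, and push the boundary inward by $\epsilon\phi$ along the outer normal $\hat\nu$, producing a family $D_\epsilon\subseteq C$ with $D_0=C$. I would then expand the two terms of $\mathscr{G}_{\lambda_C}$ to first order in $\epsilon$. The first variation of area is elementary, $|D_\epsilon| = |C| - \epsilon\int_{\partial^*C}\phi\,d\mathscr{H}^1 + O(\epsilon^2)$, while the first variation of perimeter under an inward normal motion of speed $\phi$ introduces the curvature, $P(D_\epsilon) = P(C) - \epsilon\int_{\partial^*C}\kappa\,\phi\,d\mathscr{H}^1 + O(\epsilon^2)$. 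Combining, the value along the path reads $\mathscr{G}_{\lambda_C}(D_\epsilon) = -\epsilon\int_{\partial^*C}(\kappa-\lambda_C)\phi\,d\mathscr{H}^1 + O(\epsilon^2)$.

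Minimality from Proposition \ref{Prop_G} gives $\mathscr{G}_{\lambda_C}(D_\epsilon)\geq 0 = \mathscr{G}_{\lambda_C}(C)$ for every $\epsilon>0$, so the linear coefficient cannot be negative; that is, $\int_{\partial^*C}(\kappa-\lambda_C)\phi\,d\mathscr{H}^1\leq 0$ for all admissible $\phi\geq 0$. Localizing $\phi$ around an arbitrary boundary point then forces $\kappa(x)-\lambda_C\leq 0$ for $\mathscr{H}^1$-almost every $x\in\partial^*C$, and taking the supremum yields $\max_{x\in\partial^*C}\kappa(x)\leq\lambda_C$, which is \eqref{eq:max_curvature_tv}.

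The main obstacle I anticipate is rigor rather than ideas: for a general finite-perimeter set the curvature is only a distributional object, so the perimeter first-variation formula and the uniform control of the $O(\epsilon^2)$ remainder need justification. Here the Convexity Condition (Theorem \ref{Theorem_Convex}) is what saves the argument, since convexity makes $\partial C$ Lipschitz with a well-defined nonnegative curvature measure, allowing the normal flow $D_\epsilon$ and its length/area expansions to be made precise; alternatively one can avoid the smooth variation entirely by directly building a competitor $D\subseteq C$ that excises a thin cap near a point where $\kappa>\lambda_C$ and checking $\mathscr{G}_{\lambda_C}(D)<0$, contradicting item (2) of Proposition \ref{Prop_G}.
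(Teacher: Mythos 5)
Your proposal is correct and is essentially the argument the paper itself points to but does not write out: the paper only offers a remark stating that the curvature condition ``was proved in \cite{TVFlowInRN} using Prop.~\ref{Prop_G}'', and the standard proof in that reference is exactly your first-variation computation of $\mathscr{G}_{\lambda_C}$ under one-sided (inward) normal perturbations. One small point to tidy: Proposition~\ref{Prop_G} as stated in the paper assumes the competitor $D$ is itself calibrable, whereas your perturbed sets $D_\epsilon$ need not be; but the inequality you actually need, $P(D)\geq\lambda_C|D|$ for \emph{every} finite-perimeter $D\subseteq C$, already follows from the vector-field estimate $\lambda_C|C\cap D|=\int_D\Div(\xi_C)\leq P(D)$ appearing in the proof of Theorem~\ref{Theorem_Convex}, so the gap is purely one of citation, not of substance.
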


\begin{rem}
This important property was proved in \cite{TVFlowInRN} using Prop. \ref{Prop_G}; but also in \cite{CurvProof_alter2005characterization,CurvProof_giusti1978equation,CurvProof_kawohl2006characterization} in aspects such as cheeger sets and anisotropic norms.
\end{rem}
We now turn to the main topic of the paper, concerned with the anisotropic functional, and see where insights gained in the TV theory can be extended and applied.

\section{The Adapted Anisotropic Total Variation (A$^2$TV) Functional}
\label{sec:AATV}
Several variants of anisotropic TV were proposed in different studies  for  applications such as 
inpaiting \cite{ex_Inpainting_example_duan2017introducing}, segmentation  \cite{ex_Segmentation_example_hong2017multi}, filtering \cite{ex_filtering_example_krajsek2010diffusion} and restoration \cite{ex_Denoising_example_min2017compressive} . The  formulation and analysis most relevant for our study is the one of Grasmair \& Lenzen \cite{grasmair2010anisotropic}. In \cite{grasmair2010anisotropic} the regularizer has been presented in the form of Eq. \eqref{eq_IntroductionAATVFunctional}. In order to define spatially adaptive operators, which are useful in the analysis, we slightly change the notations. We introduce a matrix $A(x)$ which is related to $R(x)$ by $R(x)=A^T(x) A(x)$. The functional in the strong-sense is thus defined by,
\begin{equation}
\label{eq_lancoz}
\mathcal{J} (u) = \int_\Omega \sqrt{\nabla u^T(x)A^T(x) A(x)\nabla u(x)} dx,
\end{equation}
where $u$ is a smooth function and $A(x)\in\mathbb{R}^{2\times 2}$ is a symmetric matrix, spatially adapted, $A\succ 0$. We have $A\in L^\infty$ and specifically in our applications $0<det(A)(x)\le1$.
We wish to represent this regularizer with a slightly different formulation, better suited for our framework. Therefore, we introduce variants of the gradient and divergence operators.
\subsection{Introducing $\Grad_A(\cdot)$ and $\Div_A(\cdot)$}
One way to view A$^2$TV, is as a total variation regularization applied on a deformed space.
Another way, is to apply regularization with space-dependent weights. 
Either way, it is beneficial to define modified variants of the gradient and divergence operators. This yields, in addition, a straightforward way to implement the A$^2$TV.  

We define the following gradient,
\begin{equation}
\label{eq_GradA}
\Grad_A = \nabla_A \triangleq A\nabla.
\end{equation}
One can obtain the appropriate divergence operator by finding the Hermitian adjoint of our adapted gradient, as the operators are connected by the identity,
\[ \langle \Grad_A u,\vec{v}\rangle  =  \langle u,- \Div_A \vec{v}\rangle. \]
To solve this, we first note that while our adapted gradient operates on a set of functions admitting the Neumann boundary condition, i.e.  , $\forall x \in \partial \Omega: \frac{\partial u}{\partial n}(x) =0 $, the adjoint operator  assumes vectors admitting the Dirichlet condition, i.e. $\forall x \in \partial \Omega: \vec{v}(x) = \begin{pmatrix} v_1(x) \\ v_2(x) \end{pmatrix} = \vec{0} $ . The adapted divergence is derived by,
\begin{align}
\langle\nabla_Au,\vec{v}\rangle 
 &= \int_\Omega A\nabla u \cdot\vec{v}dx = \int_\Omega \nabla u \cdot A^{T}\vec{v}dx
 = \underbrace{\int_{\partial\Omega} u \left(A^{T}\vec{v}\cdot\hat{n}\right) dx}_{=0} - \int_\Omega u \nabla\cdot\left(A^{T}\vec{v}\right) dx \notag\\
 &= \langle u,-\nabla_A^T\vec{v}\rangle \notag.
\end{align}
Since $A$ is symmetric, we obtain a definition for the adapted divergence,
\begin{equation}
\label{eq_DivA}
 \Div_A = \nabla_A^T \triangleq \nabla^T A^T = \nabla^T A.
\end{equation}

\subsection{Functional Definition}
The A$^2$TV functional is spatially adapted in order to preserve the structure of the objects within the image $u$. Using the above operators, the Adapted Anisotropic Total Variation functional in a domain $\Omega$ is given by,
\begin{equation}
\label{eq_J_AATV}
J_{A^2TV}(u)  = \int_\Omega| A(x)\nabla u | dx = \int_\Omega |\nabla_A u(x)|dx,
\end{equation}
for smooth functions $u$.
The weak-sense formulation for $u\in BV(\Omega)$  is,
\begin{equation}
\label{eq_AATVstrongDef}
J_{A^2TV}(u) = \sup_{\xi^A \in C_c^\infty, \|\xi^A\|_\infty\leq 1}\int_\Omega u(x)\Div_A\xi^A(x)dx,
\end{equation}
where $A(x)\in\mathbb{R}^{2\times 2}$  is the same one as in Eq. \eqref{eq_lancoz}.
There are many ways to choose the matrix  $A$. We focus on two of them below.
\subsection{Definition of the matrix $A$}
\label{DefA}
There are several options for how to compute the matrix $A$ in an image-driven manner, depending on the task at hand. 
We focus on the classical coherence-enhancing method presented by Weickert for anisotropic diffusion \cite{AnisoWeickert}. There is still an active research related to this interesting process with very useful applications recently suggested for shape inpainting and compression \cite{schmaltz2014understanding,peter2017turning}.
We then simplify it for sets in order to better analyze it theoretically and numerically. 

We are interested in a tensor $A$ that properly preserves edges of complex objects. We would like a strong regularization along edges and weaker one across them. In order to estimate locally the directions a smoothed structure tensor is used. The smoothed structure tensor is defined by the outer product,
 
\[ J_{\rho} (\nabla u_\sigma) = k_\rho * \left( \nabla u_\sigma \otimes \nabla u_\sigma\right) ,\]
where $\nabla u_\sigma$ is the gradient of an image $u$  smoothed by a Gaussian kernel $k_\sigma$  with variance $\sigma^2$ . The structure tensor itself is smoothed by a Gaussian kernel with variance $\rho^2$. In two dimensions, we can write a simple expression for the tensor as a $2 \times 2$ matrix for each point $x=(x_1,x_2)\in \Omega$,

\[ J_{\rho} (x) =
\begin{pmatrix}
(k_\rho* u_{x_1;\sigma}^2)(x) &(k_\rho* u_{x_1;\sigma}u_{x_2;\sigma})(x) \\
(k_\rho* u_{x_2;\sigma}u_{x_1;\sigma})(x) &(k_\rho* u_{x_2;\sigma}^2)(x) 
\end{pmatrix}.
 \]
This matrix has eigenvectors corresponding to the average direction of the gradient and the tangent at a neighborhood $\sigma$ around point $x$ and eigenvalues corresponding to the magnitude of each direction. In order to preserve structure, we should attenuate the eigenvalue corresponding to the gradient direction (or increase the eigenvalue corresponding to the tangent direction).
Let us now give a precise scheme for the tensor. We begin by looking at the eigen-decomposition of the smoothed structure tensor,
\[ J_\rho = V \Lambda V^{-1}, \]
where $V$ is a matrix whose columns, $v_1,v_2\in\mathbb{R}^2$, are the eigenvectors of $J_\rho$ , and   $\Lambda$ is a diagonal matrix with the corresponding eigenvalues,
%
\begin{equation}
\label{def_V_D}
\Lambda = \begin{pmatrix}
\mu_1 &0 \\ 0 &\mu_2
\end{pmatrix}; 
\hspace{1cm}
V = \left(v_1;v_2\right).
\end{equation}
 Let us denote the eigenvalues by $\mu_1$,$\mu_2$ , and assume $\mu_1 \ge \mu_2$ (where $\mu_i$ is the corresponding eigenvalue of $v_i$). We keep the original directions of the gradient and tangent, by keeping the matrix $V$ , and applying a modification on the matrix $\Lambda$ . We consider two modifications.  
 The first, a well-known modification suggested by Weickert,
 
 \begin{equation}
 \label{Eq_Wickert_D_tilde}
\Lambda_1 = \begin{pmatrix}
c\left(\frac{\mu_1}{\mu_{1,mean}},k\right) &0 \\ 0 &1
\end{pmatrix};
\hspace{1cm}
c(s;K) = \Bigg\{
\begin{array}{lc}
1  & s\leq 0\\
1-e^{-\frac{C_m}{(s/K)^m}}  &s>0
\end{array}.
\end{equation}
The matrix $A$ is then defined by, 
\begin{equation}
 \label{Eq_Wickert_A}
A=V\Lambda_1V^{-1}.
\end{equation}

This modification is highly instrumental for natural complex images, which include texture and noise. However, it has several parameters and it is not trivial to analyze. We would like to examine an image $f$ which is an indicator function of some set $C$. One can simplify the construction of $A$ considerably, by using a single parameter $a$, controlling the degree of anisotropy. For that, we propose a second modification, $\Lambda_2$ (with a fixed value $a\in (0,1]$), by the following definition,

\begin{equation}
\label{eq_Dtilda}
\Lambda_2 =
\begin{cases}
\begin{pmatrix}
a & 0 \\ 0 & 1
\end{pmatrix}\textrm{   ,on} \hspace{2mm} \partial C\\
I \hspace{13mm} \textrm{  ,otherwise.}
\end{cases};
\hspace{13mm}
A = 
\begin{cases}
V\Lambda_2V^{-1} \textrm{    ,on} \hspace{2mm} \partial C\\
I \hspace{13mm} \textrm{  ,otherwise.}
\end{cases}.
\end{equation}
In this case we also assume, for simplicity, that $\sigma \to 0$ and that the set $C$ is of bounded curvature such that $v1,v2$ align exactly with the gradient and level-set directions, respectively.
%






\newpage
\subsection{Regularizing Using A$^2$TV}
\subsubsection{The A$^2$TV Flow}
We denote by $u(t;x)$ the A$^2$TV flow, which is the solution of the following PDE,
\begin{equation}
\label{eq_AATVflow}
\begin{array}{lcl}
\frac{\partial u}{\partial t} = \Div_A \left( \xi^A(x) \right) &on &(0,\infty)\times\Omega \\\\
\frac{\partial u}{\partial n} = 0  &in &(0,\infty)\times\Omega \\\\
u(0;x) = f(x) &in &x\in\Omega,
\end{array}.
\end{equation}
where $\xi^A$ admits $ \argsup_{\xi^A \in C_c^\infty, \|\xi^A\|_\infty\leq 1}\int_\Omega u(x)\Div_A\xi^A(x)dx$.

\subsubsection{Adaptive ROF Model}
As in the TV case, an ROF-type energy can be formulate, based on the functional $J_{A^2TV}(u)$,
\begin{equation}
\label{eq_AATVROFModel}
\mathscr{E}_{A^2TV}(u)= t\cdot J_{A^2TV}(u) + \frac{1}{2}\|u-f\|_{L_2(\Omega)}^2.
\end{equation}
In order to better understand the characteristics of this regularization, we turn to analyzing the nonlinear eigenfunctions induced by the subgradient of $A^2TV$ .
We want to show that the A$^2$TV functional can adapt itself to a wide variety of shapes, which incorporate all the eigenfunctions of the isotropic TV functional.
\subsection{Basic Properties of A$^2$TV}
The A$^2$TV functional is 1-homogeneous,
since for any scalar $\mu \in \mathbb{R}$ we have
\begin{equation}
    J_{A^2TV}(\mu u) = \int_\Omega |\nabla_A \mu u(x)|dx = |\mu| \int_\Omega |\nabla_A u(x)|dx = |\mu| J_{A^2TV}(u).
    \label{eq_1homofunctional}
\end{equation}
In addition, as for any absolutely one-homogeneous functional, the sub-gradient of the A$^2$TV, $p(u) \in\partial J_{A^2TV}(u)$ is zero-homogeneous: $p(\mu u)=p(u)$ .
\off{Which can be seen through the definition of the functional, $J_{A^2TV}(u) = \left<u,p\right>$,
\begin{equation}
    J_{A^2TV}(\mu u) = \left<\mu u,p(\mu u)\right> = \mu \left<u,p(\mu u)\right> \underbrace{=}_{J_{A^2TV}\textrm{ is }1-h} \mu J_{A^2TV}(u),
\end{equation}
after dividing by $\mu$ we get $p(\mu u) = p(u)$ which suggests indeed p is 0-homogeneous.
}
The sub-gradient can be written as,
\begin{equation}
p(x) = \Div_A(\xi^A(x)), 
\label{eq_subgradientAATV}
\end{equation}
where $\xi^A(x)$ admits $ \argsup_{\xi^A \in C_c^\infty, \|\xi^A\|_\infty\leq 1}\int_\Omega u(x)\Div_A\xi^A(x)dx$.

We refer to $u$ as an eigenfunction of $J_{A^2TV}$ if it admits the following eigenvalue problem,
\begin{equation}
\label{eq_ef_AATV}
    p(u) = \lambda u \in \partial J_{A^2TV} (u),
\end{equation}
where $\lambda \in \mathbb{R}$ is the corresponding eigenvalue. 
For any $u$ not in the null-space of $J_{A^2TV}$ we have that lambda is strictly positive since from \eqref{eq_ef_AATV} we have $\langle p(u),u \rangle = \langle \lambda u, u \rangle $, yielding $\lambda = \frac{J_{A^2TV}}{\|u\|_2^2} > 0$.
 For the A$^2$TV flow \eqref{eq_AATVflow}, when the initial condition $f$ is an eigenfunction, $\lambda f \in \partial J_{A^2TV}(f)$, the analytic solution is,
\begin{equation}
    u(t,x) = \Bigg\{ 
    \begin{array}{lcl}
    (1-t\lambda)f     & \textrm{for} & t \in [0,\frac{1}{\lambda}],\\
    0                 & \textrm{otherwise}.
    \end{array}.
\label{eq_AATVflowSolution}
\end{equation}
This result is shown in \cite{Gilboa2016} for all absolutely one-homogeneuous functionals.
\off{
Let us now focus on A$^2$TV as defined for sets in \eqref{eq_Dtilda}.

\textcolor{brown}{REMOVE:
For the A$^2$TV case, in contrast of the TV one where the energy of a characteristic function is its perimeter, we have the effect of the adapted matrix to take into account. Recalling the gradient inside the set is zero and on the edges the gradient is actually the eigenvector of the eigenvalue $a$ of the adapted matrix $A$,}\SB{I need the connection of: the gradient inside the set is zero and on the edges the gradient is actually the eigenvector of the eigenvalue $a$ of the adapted matrix $A$} for $u = \chi_C$ where $C\subseteq\Omega$\SB{is an A-Calibrable set, Agreed},
\GG{The property is true for all sets, not only calibrable. It should be shown using the weak definition \eqref{eq_AATVstrongDef}, one cannot apply the gradient directly on $u$ which is not continuous. }}

Let us now focus on A$^2$TV as defined for sets in Eq. \eqref{eq_Dtilda} as we would like to link the A$^2$TV energy to the perimeter of a shape $C\subseteq\Omega$. If $u$ is an indicator function of $C$, it can be shown that
\begin{equation}
    J_{A^2TV}(u) =  a Per(C),
\end{equation}
where $a$ is defined in \eqref{eq_Dtilda}. This comes from the coarea formula of TV and using
$$
    J_{A^2TV}(u) = \int_\Omega |\nabla_A u(x) |dx = \int_\Omega |A(x)\nabla u(x)|dx = \int_\Omega |a\nabla u(x)|dx = a \int_\Omega |\nabla u(x)|dx,
$$
where the gradient should be understood in the distributional sense.
%
\off{
In \cite{A2TV_Properties_meyers1964hdw} Meyers at el and refined for sobolev spaces later on in \cite{A2TV_Properties_chambolle2010introduction} by Chambolle at el, we have Meyers-Serrin’s approximation Theorem which shows that BV functions may be “well” approximated with smooth functions.
Meaning, let $\Omega\subset\mathbb{R}^N$ be an open set and let $u\in BV(\Omega)$: then there exist a sequence $(u_n)_{n\geq 1}$ of functions in $C^\infty(\Omega\cap W^{1,1}(\Omega)$ such that,
\begin{enumerate}
    \item $u_n \rightarrow u \in L^1(\Omega)$.
    \item $J(u_n) = \int_\Omega|\nabla u_n(x)|dx \rightarrow J(u) = \int_\Omega|Du(x)|dx$ as $n\rightarrow\infty$.
\end{enumerate}
Therefore, by recalling that the gradient inside the set vanishes and on the edges it is the eigenvector of the eigenvalue $a$ of the adapted matrix $A$, and by using Meyers-Serrin’s approximation Theorem; for $u=\chi_C$ which has discontinuities, we shall use a series $u_n$ such that $u_n\rightarrow u$ and $J_{A^2TV}(u_n) \rightarrow J_{A^2TV}(u)$ as $n\rightarrow\infty$,
\begin{equation}
    J_{A^2TV}(u_n) = \int_\Omega |\nabla_A u_n(x) |dx = \int_\Omega |A(x)\nabla u_n(x)|dx = \int_\Omega |a\nabla u_n(x)|dx = a \int_\Omega |\nabla u_n(x)|dx = a Per(C),
\end{equation}
now, since $J_{A^2TV}(u_n) \rightarrow J_{A^2TV}(u)$ as $n\rightarrow\infty$, we have $J_{A^2TV}(u) = a Per(C)$.
Thus we see that as in the TV case, the A$^2$TV functional admits a co-area formula.
} 
For an eigenfunction $u$ which is an indicator function of $C$, the eigenvalue $\lambda$ can be computed by,
\begin{equation}
     \lambda =  \frac{J_{A^2TV}(u)}{\norm{u}^2} = a\frac{Per(C)}{|C|}. 
\end{equation}

\subsection{Characteristics of A$^2$TV Eigenfunctions}
As in the TV case,  we can discuss the eigenfunctions induced by A$^2$TV. 
Developing a theory for the Adapted Anisotropic Total Variation  gives way to characterize the basic structures the regularizer preserves. 
Here again, there are two main conditions that should be fulfilled by  eigenfunctions: \emph{The convexity and the curvature conditions}.
By extending the definitions of \cite{TVFlowInRN}, we can provide part of the theory of the A$^2$TV flow and its eigenfunctions.
For this purpose, we use the matrix $A$ with  $\Lambda_2$, as defined in Eq. \eqref{eq_Dtilda}.

\subsubsection{A non-convexity condition}
In order to prove a new condition related to convexity, we first introduce several definitions.
Let us set $\lambda_C^A$ as,
\begin{equation}
\label{eq_lambdaAATV}
\lambda_C^A = a\frac{P(C)}{|C|},
\end{equation}
where, $C\subset\mathbb{R}^2$ is a bounded set of finite perimeter and $a$ is the parameter of Eq. \eqref{eq_Dtilda}. Let us now define an $A$-calibrable set.

\begin{definition}[A-Calibrable Set]
\label{def_ACalibrableSets}
Let $C\subseteq\mathbb{R}^2$ be a set of finite perimeter. We say that $C$ is a $-A$-calibrable set if there exists a vector field $\xi^{A-}_C:\mathbb{R}^2\to\mathbb{R}^2$ with the following properties:
\begin{enumerate}
\item $\xi^{A-}_C\in L^2_{loc}(\mathbb{R}^2;\mathbb{R}^2)$ and  $\Div_A(\xi^{A-}_C)\in L^2_{loc}(\mathbb{R}^2)$.
\item $|\xi^{A-}_C|\leq 1$ almost everywhere in $C$.
\item $\Div_A(\xi^{A-}_C)$ is constant on $C$.
\item $\theta(\xi^{A-}_C,-D\chi_C)(x)=-1$ for $\mathscr{H}^1$-almost every $x\in\partial^*C$.
\end{enumerate}
\end{definition}
We say that $C$ is $+A$-calibrable if there exists a vector field $\xi^{A+}_C:\mathbb{R}^2\to\mathbb{R}^2$ satisfying properties (1),(2),(3) and such that $\theta(\xi^{A+}_C,-D\chi_C)(x)=1$.\\

Basically, condition (4) means that $\xi^{A-}_C$ ($\xi^{A+}_C$) is identical to the outer (inner) normalized gradient of $\chi_C$ for almost every $x\in\partial^*C$.\\

\begin{rem}
\label{xi^A_def}
Let $C\subset\mathbb{R}^2$ be a bounded set of finite perimeter. Assume that $C$ is $-A$-calibrable and that $\mathbb{R}^2\diagdown C$ is $+A$-calibrable. Define,
\begin{equation}
\label{eq:xi_A}
\xi^A_C := -\Bigg\{
\begin{array}{lcl}
\xi^{A-}_C                         & on &C\\
\xi^{A+}_{\mathbb{R}^2\diagdown C}  & on &\mathbb{R}^2\diagdown C
\end{array},
\end{equation} 
then, $\xi^A_C \in L^\infty(\mathbb{R}^2;\mathbb{R}^2)$ and $\Div_A(\xi^A_C )\in L^\infty(\mathbb{R}^2)$.\\
\end{rem} 
%
%
Here, similarly to Eq. \eqref{eq_divxi}, applying the operator $\Div_A$ yields,
 \begin{equation}
 \label{DivA_Constant}
\Div_A(\xi^A_C) := \lambda^A_C\chi_C = \Bigg\{
\begin{array}{lcl}
\lambda^A_C       & on &C\\
0  & on &\mathbb{R}^2\diagdown C
\end{array}.
\end{equation} 
%
We can now state the following non-convexity condition.
\begin{theorem}[The Non-Convexity Condition]
\label{Theorem_nonConvex}
Let $C\subset\mathbb{R}^2$ be a bounded set of finite perimeter which is $-A$-calibrable,  and let $\mathbb{R}^2\diagdown C$ be $+A$-calibrable. Then,
\begin{enumerate}
\item $C$ can be non-convex to some extent. 
\item $\forall D\subseteq C$, we have $\lambda^A_D \geq\frac{1}{a}\lambda^A_C$.
\end{enumerate}
\end{theorem}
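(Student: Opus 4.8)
The plan is to mirror the proof of the isotropic convexity condition (Theorem \ref{Theorem_Convex}), replacing the pair $(\Div,\xi_C)$ by the adapted pair $(\Div_A,\xi^A_C)$ and exploiting the defining identity \eqref{DivA_Constant}, namely $\Div_A(\xi^A_C)=\lambda^A_C\chi_C$, together with the elementary fact $\Div_A(\xi^A_C)=\nabla\cdot(A\xi^A_C)$. First I would fix a competitor set $D$ of finite perimeter and test the identity against $\chi_D$. Using $\lambda^A_C\chi_C=\nabla\cdot(A\xi^A_C)$ and a Gauss--Green argument gives
\[
\lambda^A_C\,|C\cap D|=\int_D \nabla\cdot(A\xi^A_C)\,dx=\oint_{\partial D}(A\xi^A_C)\cdot\hat{\nu}_D\,d\mathscr{H}^1 .
\]
Since $\|\xi^A_C\|_\infty\le1$ and the eigenvalues of $A$ all lie in $(0,1]$ (so $\|A\|\le1$), the integrand is bounded by $1$, which yields the master inequality $\lambda^A_C\,|C\cap D|\le P(D)$, the direct anisotropic analogue of the estimate chained in Theorem \ref{Theorem_Convex}.

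For part (1) I would specialise to $D\supseteq C$, so $C\cap D=C$. Because $\Div_A(\xi^A_C)=0$ outside $C$ by \eqref{DivA_Constant}, the left-hand side is $\lambda^A_C|C|=a\,P(C)$, and the master inequality reads $a\,P(C)\le P(D)$ for every $D\supseteq C$. Choosing $D=Conv(C)$ gives
\[
a\le \frac{P(Conv(C))}{P(C)}=\mathcal{H}(C).
\]
This is the crucial departure from the isotropic case: whereas for TV the same step forced $P(C)=P(Conv(C))$ and hence convexity, here it only requires the convexity measure $\mathcal{H}(C)$ of \eqref{eq:conv_C} to exceed the anisotropy parameter $a$. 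Since $\mathcal{H}(C)\in(0,1]$ with equality if and only if $C$ is convex, any non-convex $C$ with $\mathcal{H}(C)\ge a$ is admissible, and decreasing $a$ permits progressively more non-convex eigenfunctions. This is exactly the ``non-convex to some extent'' assertion, with the permissible degree governed by the Peura--Iivarinen descriptor.

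For part (2) I would instead take $D\subseteq C$, so that $C\cap D=D$ and the master inequality becomes $\lambda^A_C\,|D|\le P(D)$. Substituting the definition $\lambda^A_D=a\,P(D)/|D|$ from \eqref{eq_lambdaAATV} turns this perimeter bound into the claimed lower bound relating $\lambda^A_D$ and $\lambda^A_C$ through the parameter $a$; this is the anisotropic counterpart of the monotonicity $\lambda_D\ge\lambda_C$ obtained at the end of Theorem \ref{Theorem_Convex}.

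I expect the genuine obstacle to be the boundary term, not the set-comparison algebra. The field $A\xi^A_C$ is only $L^\infty$ with $L^\infty$ divergence, so the expression $\oint_{\partial D}(A\xi^A_C)\cdot\hat{\nu}_D$ must be given meaning through the Anzellotti/Gauss--Green pairing between $BV$ functions and bounded divergence-measure fields, exactly as in \cite{TVFlowInRN}. The delicate point is the normal trace where $\partial D$ meets $\partial C$: there $A$ is no longer the identity and acts by the factor $a$ along the gradient direction, so it is the alignment condition (4) of Definition \ref{def_ACalibrableSets} that guarantees both that the trace is well defined for $\mathscr{H}^1$-almost every boundary point and that the integrand remains $\le1$. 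Making this pairing and its estimate rigorous is the hard part; once it is in place, both assertions follow from the master inequality by the two choices $D\supseteq C$ and $D\subseteq C$.
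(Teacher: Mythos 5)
Your proposal is essentially the paper's own proof: the same master inequality $\lambda^A_C|C\cap D|\le P(D)$ obtained by testing $\Div_A(\xi^A_C)=\lambda^A_C\chi_C$ against $\chi_D$ and applying Gauss--Green, then specialised to $D\supseteq C$ (taking $D=Conv(C)$, which the paper does only in the discussion following the theorem, arriving at the same bound $a\le \mathcal{H}(C)$) and to $D\subseteq C$. You are in fact slightly more careful than the paper on the boundary term, writing the flux as $(A\xi^A_C)\cdot\hat{\nu}_D$ with the bound coming from $\|A\|\le 1$ and $\|\xi^A_C\|_\infty\le 1$, where the paper writes $\xi^A_C\cdot\hat{\nu}_D$; your remarks on the Anzellotti pairing address the same rigor issue the paper leaves implicit.

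One caveat on part (2): if you actually carry out the substitution $P(D)=\lambda^A_D|D|/a$ in $\lambda^A_C|D|\le P(D)$, you obtain $\lambda^A_C\le\lambda^A_D/a$, i.e.\ $\lambda^A_D\ge a\,\lambda^A_C$ --- which is exactly what the paper's proof concludes, but is \emph{weaker} (since $a\in(0,1]$) than the inequality $\lambda^A_D\ge\frac{1}{a}\lambda^A_C$ displayed in the theorem statement. So your assertion that the algebra ``turns this perimeter bound into the claimed lower bound'' should not be taken on faith: the argument delivers $a\lambda^A_C\le\lambda^A_D$, and the stated constant $1/a$ does not follow from this route.
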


\begin{proof}
Let $\xi^A_C \in L^\infty(\mathbb{R}^2;\mathbb{R}^2), \| \xi^A_C \|\leq 1$ be the vector field defined by Eq. \eqref{eq:xi_A}.
Let $D$ be a set of a finite perimeter. Using Eq. \eqref{DivA_Constant} and part 2 of Def. \ref{def_ACalibrableSets}, 
we have,\\
\[
\begin{aligned}
\lambda^A_C |C\cap D| 
&= \lambda^A_C\int_{\mathbb{R}^2}\chi_C\chi_D 
= \int_{\mathbb{R}^2}\lambda^A_C\chi_C\chi_D 
= \int_{\mathbb{R}^2}\Div_A(\xi^A_C )\chi_D \\
&= \int_D \Div_A(\xi^A_C)
=\underbrace{ \oint_D \xi^A_C \cdot \hat{\nu}_D}_{\substack{
\textrm{Divergence Theorem}\\ \hat{\nu}_D \textrm{ outer normal of $D$}}}
\leq \underbrace{\oint_D dl}_{\|\xi^A_C\|\leq 1,\hspace{1mm}\|\hat{\nu}_D\|=1} = Per(D).
\end{aligned}
\] 
Hence, $\forall D\supset C$, we have $aPer(C) = \lambda^A_C |C| = \lambda^A_C |C\cap D|$ which gives us $aPer(C) \leq Per(D)$. Thus $C$ can be non convex, depending on the parameter $a$ (the parameter $a$ can compensate for the non-convexity of the set $C$).
As $\forall D\subset C$ we have $\lambda^A_C |D| = \lambda^A_C |C\cap D|$, we can conclude $\lambda^A_C\leq \frac{1}{a}\lambda^A_D$.
\end{proof}\\

Basically, Theorem \ref{Theorem_nonConvex} implies that a set $C$ can be an eigenfunction, also if it is not convex, provided the anisotropy parameter $a$ is small enough.
The maximal $a$ that  maintains $C$ as an eigenfunction is,
\begin{equation}
\label{eq_AATV_a_max__NonConvexity}
a_{max} = \frac{Per(D)}{Per(C)},
\end{equation}
where $D$ is the convex hull of $C$ as it is shown in figure \ref{fig:Convex_Hull}.
This coincides precisely with the convexity measure of  Peura et al. \cite{peura1997efficiency}, Eq. \eqref{eq:conv_C}.
\begin{figure}[!h]
\centering
\includegraphics[width = 0.25\textwidth]{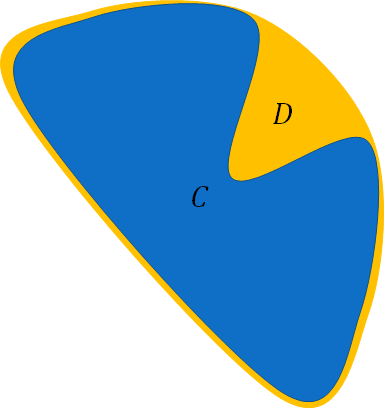}
\caption{Non-Convex shape example. Here, $D$ is the convex hull of $C$. As the perimeter of $C$ becomes larger, $a_{max}$ should be smaller to compensate for the increased degree of non-convexity of $C$, in the sense of Eq. \eqref{eq:conv_C}.}
\label{fig:Convex_Hull}
\end{figure} 
The $A$-calibrable sets include the classical calibrable sets, as stated in the following theorem.
\begin{theorem}[$A$-calibrable sets generalization]
\label{Theorem_A-Calibrable_Sets_Generalization}
Let $C\subset\mathbb{R}^2$ be a set of finite perimeter and a calibrable set. Then, it is also an $A$-calibrable set. 
\end{theorem}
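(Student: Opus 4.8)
The plan is to show that any classical calibrable set $C$ satisfies the four defining properties of an $A$-calibrable set by explicitly constructing the required vector field $\xi^{A-}_C$ from the known TV calibrating field $\xi^-_C$. Since $C$ is calibrable, Definition \ref{def:CalibrableSets} furnishes a field $\xi^-_C$ with $|\xi^-_C|\le 1$ on $C$, $\Div(\xi^-_C)$ constant on $C$, and $\theta(\xi^-_C,-D\chi_C)=-1$ on $\partial^*C$. The natural idea is to define the candidate anisotropic field by correcting for the matrix $A$, i.e. to set $\xi^{A-}_C := A^{-1}\xi^-_C$ (or, equivalently, to transport the field through the deformation encoded by $A$), so that the adapted divergence $\Div_A(\xi^{A-}_C)=\nabla^T A\,\xi^{A-}_C=\nabla^T\xi^-_C=\Div(\xi^-_C)$ reproduces the original, already-constant, divergence on $C$.

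First I would verify property (3): with the substitution above, $\Div_A(\xi^{A-}_C)=\Div(\xi^-_C)$ is constant on $C$ by the calibrability of $C$, and property (1) follows since $A\in L^\infty$ with $0<\det A\le 1$ keeps $A^{-1}\xi^-_C$ in $L^2_{loc}$ and its adapted divergence in $L^2_{loc}$. The key simplification is that, by the definition of $A$ in Eq. \eqref{eq_Dtilda}, we have $A=I$ everywhere except on $\partial C$; so inside $C$ the field is unchanged and all interior estimates transfer verbatim. Next I would check property (2), the pointwise bound $|\xi^{A-}_C|\le 1$ a.e. in $C$: since $A=I$ in the interior, $\xi^{A-}_C=\xi^-_C$ there and the bound is inherited directly.

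The boundary condition (4) is where the argument requires care and is the main obstacle. On $\partial^*C$ the matrix $A=V\Lambda_2 V^{-1}$ acts nontrivially, and one must confirm that $\theta(\xi^{A-}_C,-D\chi_C)=-1$ still holds, i.e. that the transported field remains aligned with the outer normal of $\chi_C$ with the correct normalization. The favorable structure here is that, by the construction in Eq. \eqref{eq_Dtilda}, the eigenvectors $v_1,v_2$ of $A$ align exactly with the gradient (normal) and level-set (tangent) directions of $C$. Consequently $-D\chi_C$ points along the eigendirection of $\Lambda_2$, and the action of $A$ (or $A^{-1}$) merely rescales along that axis without rotating the field off the normal; the pairing $\theta$, being a normalized density with respect to $|D\chi_C|$, then returns $-1$ exactly as in the TV case. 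I would make this precise by writing $-D\chi_C$ in the $\{v_1,v_2\}$ frame on $\partial^*C$ and computing the density directly, taking care that the Neumann/Dirichlet boundary conventions used to derive $\Div_A$ in Eq. \eqref{eq_DivA} are respected.

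Finally, the same construction applied to $\mathbb{R}^2\setminus C$ shows that $+$calibrability of the complement upgrades to $+A$-calibrability, completing the claim that every calibrable set is $A$-calibrable. I expect the bulk of the routine verification to be properties (1)--(3), which essentially follow because $A=I$ off the boundary, while the genuine content of the theorem lies in reconciling the anisotropic rescaling on $\partial C$ with the normalized boundary density in property (4); the alignment assumption on $v_1,v_2$ built into Eq. \eqref{eq_Dtilda} is precisely what makes this step go through.
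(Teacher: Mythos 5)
Your overall strategy---transporting the TV calibrating field through $A$ and checking the four properties, with the boundary alignment of $v_1,v_2$ doing the real work---is in the spirit of the paper's proof, but the specific choice $\xi^{A-}_C := A^{-1}\xi^-_C$ is the wrong correction, and the argument breaks at exactly the step you flag as delicate, property (4). On $\partial^*C$ the field $\xi^-_C$ is the unit normal, an eigenvector of $A$ with eigenvalue $a$, so $A^{-1}\xi^-_C = \tfrac{1}{a}\xi^-_C$ has magnitude $1/a>1$ there. The density $\theta(\cdot,-D\chi_C)$ is normalized with respect to $|D\chi_C|$ only, not with respect to $|\xi|$; hence $\theta(A^{-1}\xi^-_C,-D\chi_C) = -1/a \neq -1$, and the trace constraint $\|\xi^A\|_\infty\le 1$ underlying the dual definition \eqref{eq_AATVstrongDef} is violated on the only set where $A$ acts nontrivially. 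You can also see the inconsistency globally: your field gives $\Div_A(A^{-1}\xi^-_C)=\Div(\xi^-_C)=\lambda_C^{TV}\chi_C$, so $\int_C\Div_A\xi^{A-}_C = P(C)$, whereas for an admissible calibrating field this integral must equal $J_{A^2TV}(\chi_C)=a\,P(C)$; the constant you produce is the TV eigenvalue $P(C)/|C|$ rather than $\lambda^A_C=aP(C)/|C|$, which is what the rest of the paper's theory requires.

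The paper's construction goes the other way: it keeps $\xi^A_C=\xi_C$ \emph{on} $\partial C$ (where applying $A$ multiplies the normal direction by $a$, so $A\xi^A_C = a\xi_C$ there) and sets $\xi^A_C = a\xi_C$ \emph{off} $\partial C$ (where $A=I$). Then $A\xi^A_C=a\xi_C$ everywhere, so $\Div_A\xi^A_C = a\,\Div\xi_C = \lambda^A_C\chi_C$ is constant on $C$, $|\xi^A_C|\le a\le 1$ in the bulk and $=1$ on the boundary, and $\theta(\xi^A_C,-D\chi_C)=\theta(\xi_C,-D\chi_C)=-1$ is inherited verbatim since the field is literally unchanged on $\partial^*C$. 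If you replace $A^{-1}$ by this ``multiply by $a$ in the bulk, leave unchanged on the boundary'' prescription, your verification of properties (1)--(3) and your remark about upgrading the $+$calibrability of the complement go through essentially as you wrote them.
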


\begin{proof}
Let $\xi_C:\mathbb{R}^2\to\mathbb{R}^2$ be a vector field satisfying Def. \ref{def:CalibrableSets} of the calibrable sets and $\xi^{A}_C:\mathbb{R}^2\to\mathbb{R}^2$ be a vector field satisfying Def. \ref{def_ACalibrableSets} of the $A$-calibrable sets.
Assuming $C$ is a calibrable set, we show that there exists a vector field $\xi^{A}_C$, which admits Def. \ref{def_ACalibrableSets}.

First, we consider the case where $x\in \partial C$. By definition, $\xi_C(x)=\xi_C^A(x)=\frac{Du(x)}{|Du(x)|}$ (where $Du$ is the gradient of $u$ in the distributional sense). 
Since at each point $x$, $\frac{Du(x)}{|Du(x)|}$ is an eigenvector of $A(x)$ with eigenvalue $a$ we have $A\xi_C^A = a\xi_C^A = a\xi_C$, therefore $\forall x\in \partial C$,
\[
    \Div_A\xi_C^A(x) = \Div A\xi_C^A(x) =  \Div a\xi_C(x) = a\lambda_C^{TV}\chi_C(x) = \lambda_C^{A^2TV}\chi_C(x).
\]
Next, we consider the case $x\in \mathbb{R}^2 \diagdown \partial C$.
Here $A=I$ and $\Div = \Div_A$, thus,
\[
    \Div \xi_C = \lambda_C^{TV}\chi_C \Rightarrow a\lambda_C^{TV}\chi_C = \lambda_C^{A^2TV}\chi_C = a\Div \xi_C = \Div a\xi_C = \Div_A a\xi_C.
\]
By assigning $\xi_C^A(x)  = a\xi_C(x)$,  $\forall x\in \mathbb{R}^2 \diagdown \partial C$ , we get $\Div_A \xi_C^A = \lambda_C^{A^2TV}\chi_C$. 
Therefore, we obtain a vector field $\xi_C^A$, defined in the entire domain, which fully admits Def. \ref{def_ACalibrableSets}.
\end{proof}

\subsubsection{The curvature condition}
We attempt to reach a bound on the maximal curvature admitted by  eigenfunctions of A$^2$TV, analogous to \eqref{eq:max_curvature_tv}.
The theoretical analysis here is highly involved and is still under investigation.
Current known theory to provide such bounds relies on the convexity property of the set and thus cannot be applied here.
At this point, we can give some intuition on how this bound should be characterized and present compelling numerical evidence.


We would like to find the relation of the maximal curvature on the set boundary to the degree of anisotropy (the parameter $a \in (0,1]$). Let us first discuss the extreme cases. For $a=1$ we get isotropic TV, therefore the bound should be as in  \eqref{eq:max_curvature_tv}. As $a$ decreases, the regularization diminishes and we expect the bound to grow as $a \to 0$.


We now turn to examining more closely the dual variable $\xi^A$.
We would like to analyze the application of the classical divergence, therefore we define
\begin{equation}
 \tilde{\xi^A} := A\xi^A, 
 \label{eq_tilde_xi^A}
\end{equation}
where $A$ is the adapted anisotropic matrix.
In Fig. \ref{fig:Curvature_Conjecture_tildeXiA_vs_AXiA}, 
one can see the differences between the vector fields $\tilde{\xi^A}$ and $\xi^A$. Note especially the smoother behavior of $\tilde{\xi^A}$ near edges.

\begin{figure}[H]
\centering
\includegraphics[width = 0.8\textwidth]{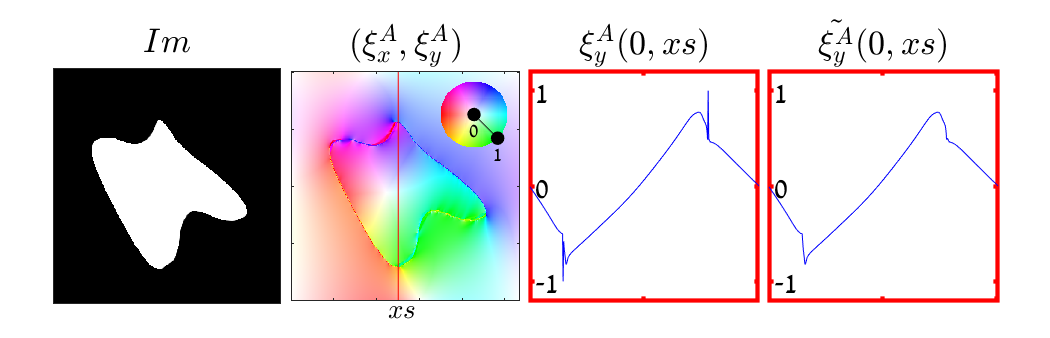}
\caption[$\tilde{\xi^A}$ vs $\xi^A$ comparison]{Comparison between the vector fields $\xi^A$ (defined by Eq. \eqref{eq_AATVstrongDef}) and $\tilde{\xi^A}$, Eq. \eqref{eq_tilde_xi^A}, (a=0.5). From left: indicator function of a nonconvex set $C$, color coded vector field $\xi^A$ and the cross section vertical line (in red), cross sections of $\xi^A$ and $\tilde{\xi^A}$. Whereas $\xi^A$ is not smooth near the edge, $\tilde{\xi^A}$ is smooth.}
\label{fig:Curvature_Conjecture_tildeXiA_vs_AXiA}
\end{figure}

Let us compare the TV definition \ref{def:CalibrableSets} and the A$^2$TV definition \ref{def_ACalibrableSets}. For eigenfunctions, $\xi$ for TV and $\tilde{\xi^A}$ for A$^2$TV, should have a constant divergence. We denote both vector fields by $\xi$. In two dimensions, a vector field has a constant divergence when its components are of the form,
\begin{equation}
\Big\{
\begin{array}{lcl}
\xi_x  =&  R(x)R'(y) + t_1x \\
\xi_y  =& -R'(x)R(y) + t_2y, 
\end{array}
\end{equation} 
for some differential function $R(\cdot)$. Clearly, we get by construction a constant divergence: $\textrm{div}(\xi) = t_1+ t_2$. For total variation the function $R(\cdot)$ is identically zero (for  a disk of radius $r$ we get $R(\cdot)=0, t_1=t_2=\frac{1}{r}$). However, our numerical observations indicate it is a more complex function for A$^2$TV. For example, in the case of ellipses, $\tilde{\xi^A}$ has a structure similar to a third degree polynomial (as can be seen in Fig \ref{fig:Curvature_Conjecture_6__2_AATV_Ellipses_Flows_Fixed_Ratio}).

Several numerical experiments were conducted using ellipse shapes (as in Fig. \ref{fig:Curvature_Conjecture_Ellipse}) with various eccentricities and A$^2$TV regularization with various degrees of anisotropy parameter ($a$). An ellipse shape is used in order to decouple the convexity and the curvature measures. Moreover, it is simple to estimate numerically its curvature. 

\begin{figure}[H]
\centering
\includegraphics[width = 0.6\textwidth]{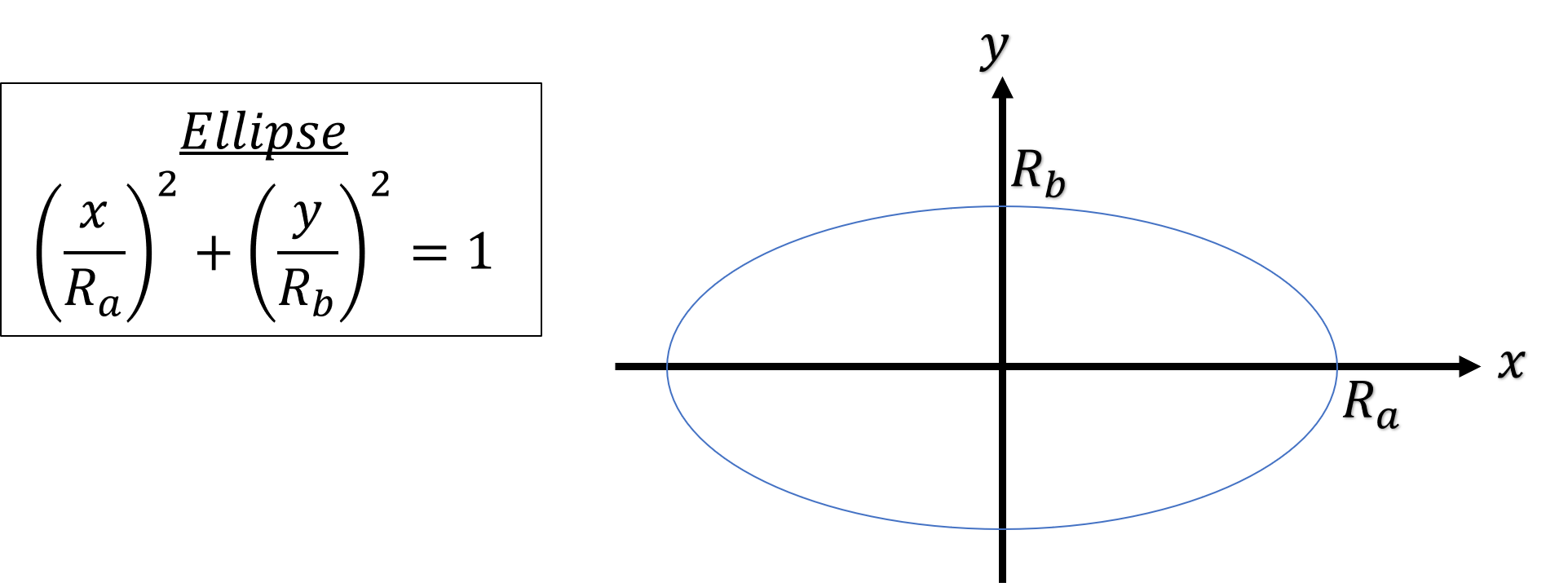}
\caption[Ellipse]{Basic ellipse shape}
\label{fig:Curvature_Conjecture_Ellipse}
\end{figure}
For an ellipse as in Fig. \ref{fig:Curvature_Conjecture_Ellipse},  $R_a\geq R_b$, the maximum curvature, located at $(x,y)=(R_a,0)$, is,
\begin{equation}
\label{eq_ellipse_kappa}
    \kappa_{max} = R_a/R_b^2.
\end{equation}

The first experiment is done by applying A$^2$TV on a single ellipse with various values of the anisotropy parameter $a$. The resultant $\tilde{\xi^A}$ cross-sections is shown  in Fig \ref{fig:Curvature_Conjecture_6__2_AATV_Ellipses_Flows_Fixed_Ratio}.
One can observe that the minor axis (of $\tilde{\xi^A_y}$) has a linear solution just as in the TV case (due to its low curvature). The major axis, on the other hand, is not a linear function. The curvature on the boundary is maximal and $\tilde{\xi^A}$ is not monotone inside the set anymore.
For $a>0.5$ the anisotropy of A$^2$TV is not sufficient and one does not obtain an eigenfunction, this can be seen by the clipping of $\tilde{\xi^A_x}$ near the value of $1$. For $a\le 0.5$ we get an eigenfunction with $\Div(\tilde{\xi^A}) = const$, here $\tilde{\xi^A_x}$ resembles a 3rd degree polynomial.
%

\begin{figure}[H]
\centering
\includegraphics[width = 0.9\textwidth]{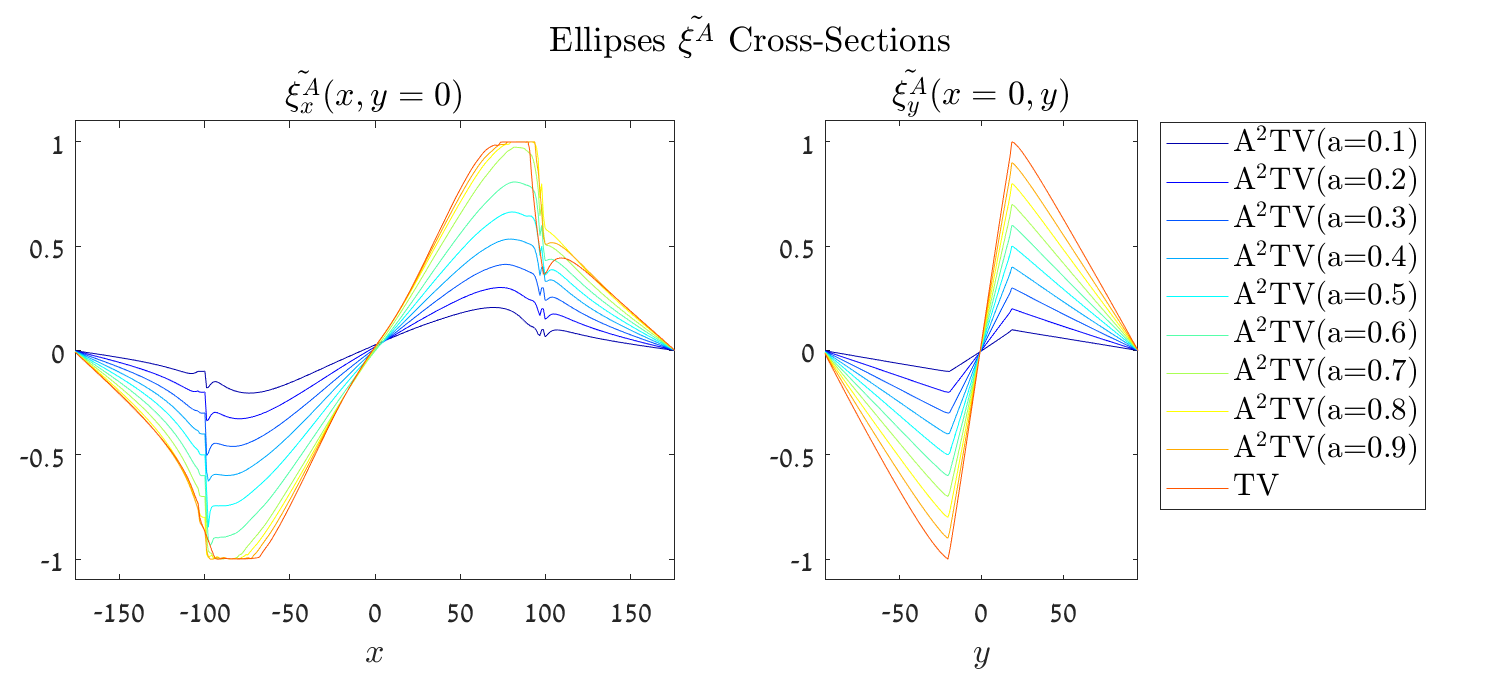}
\caption[A$^2$TV Ellipse $\tilde{\xi^A}$ Cross-Sections.]{A$^2$TV Ellipse $\tilde{\xi^A}$ Cross-Sections - A numerical experiment to asses the behavior of $\tilde{\xi^A}$ as a function of $a$. The parameters of the ellipse are: $Ra = 100, Rb = 20$.
}
\label{fig:Curvature_Conjecture_6__2_AATV_Ellipses_Flows_Fixed_Ratio}
\end{figure} 

In Fig. \ref{fig:Curvature_Conjecture_6__3_AATV_Ellipses_Flows_Fixed_a} we present results of another numerical experiment. Now, a fixed anisotropic parameter $a=0.5$ is used and ellipses with various radii ratios ($r = \frac{R_b}{R_a}$) are tested. When the eccentricity of the ellipse is low  ($\frac{R_b}{R_a}$ closer to 1) there is a linear profile for both $\tilde{\xi^A}$ axes. Actually these ellipses are  eigenfunction also of TV (ratios of $r\in(0.6,1]$). As the eccentricity grows with $r\in(0.15,0.5]$, we obtain eigenfunctions unique to A$^2$TV (with $a=0.5$). Finally, we have a very narrow ellipse, $r = 0.1$, which is not an eigenfunction, where the vector field is clipped.

\begin{figure}[H]
\centering
\includegraphics[width = 0.9\textwidth]{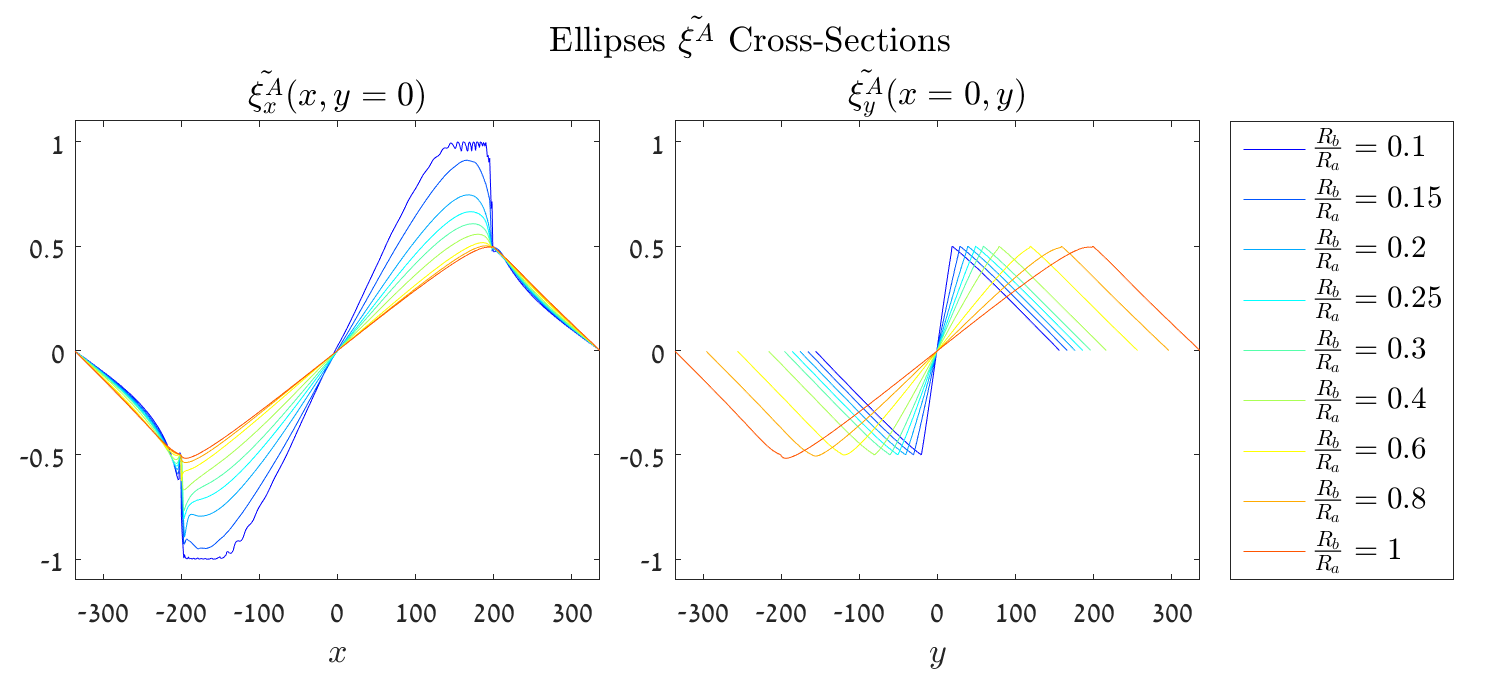}
\caption[A$^2$TV ellipse $\tilde{\xi^A}$ cross-sections]{A$^2$TV Ellipse $\tilde{\xi^A}$ Cross-Sections - An experiment to asses the behavior of $\tilde{\xi^A}$ as a function of the ratio $r = \frac{R_b}{R_a}$. Two cross-sections of $\tilde{\xi^A}$ along the axes are shown. ($a=0.5$, $R_a = 200$).
}
\label{fig:Curvature_Conjecture_6__3_AATV_Ellipses_Flows_Fixed_a}
\end{figure} 

These experiments show a distinct systematic trend relating the parameter $a$ to the set of ellipse eigenfuncions, which is summarized in Fig.\ref{fig:Curvature_Conjecture_Ellipse_Family}. As  $a$ becomes smaller, ellipses of higher eccentricity  (and curvature) become admissible eigenfunctions.

\begin{figure}[H]
\centering
\includegraphics[width = 0.7\textwidth]{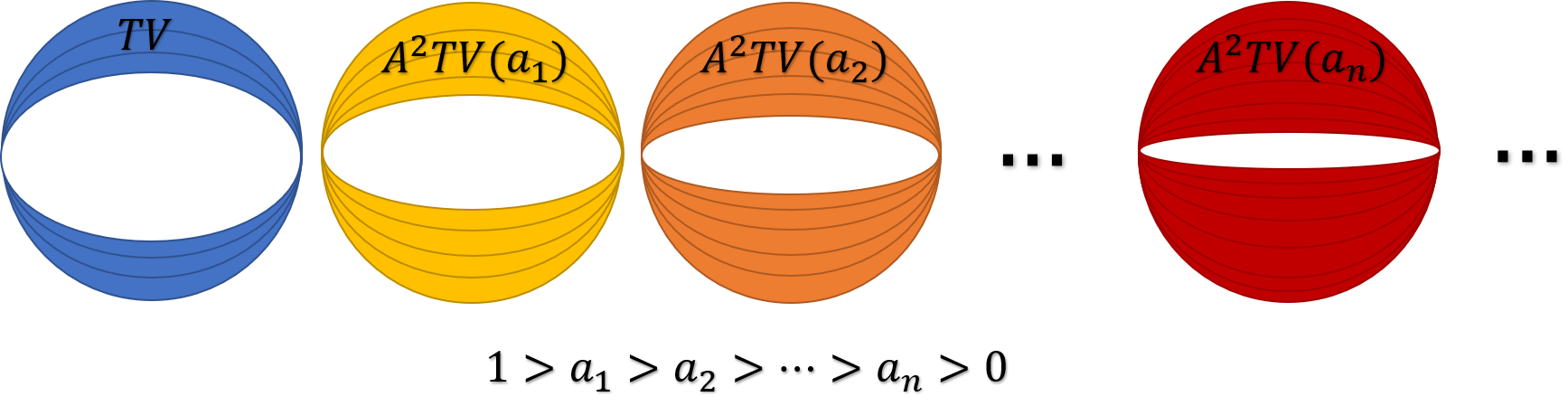}
\caption[A$^2$TV Ellipse Eigenfunction Sets.]{{\bf A$^2$TV Ellipse Eigenfunction Sets.} A visualization of the possible eigenfunction sets (colored region) of ellipses for the A$^2$TV functional as the parameter $a$ decreases (from left to right).}
\label{fig:Curvature_Conjecture_Ellipse_Family}
\end{figure} 

In order to estimate numerically a curvature condition for ellipses we performed the following experiment, with the results depicted in  Fig. \ref{fig:Curvature_Conjecture_6_AATV_Ellipses_Flows}. Multiple A$^2$TV flows (with different $a$ parameter) were applied to ellipses of different eccentricities ($R_b/R_a \in (0.2,1]$). 

\begin{figure}[H]
\centering
\includegraphics[width = 0.9\textwidth]{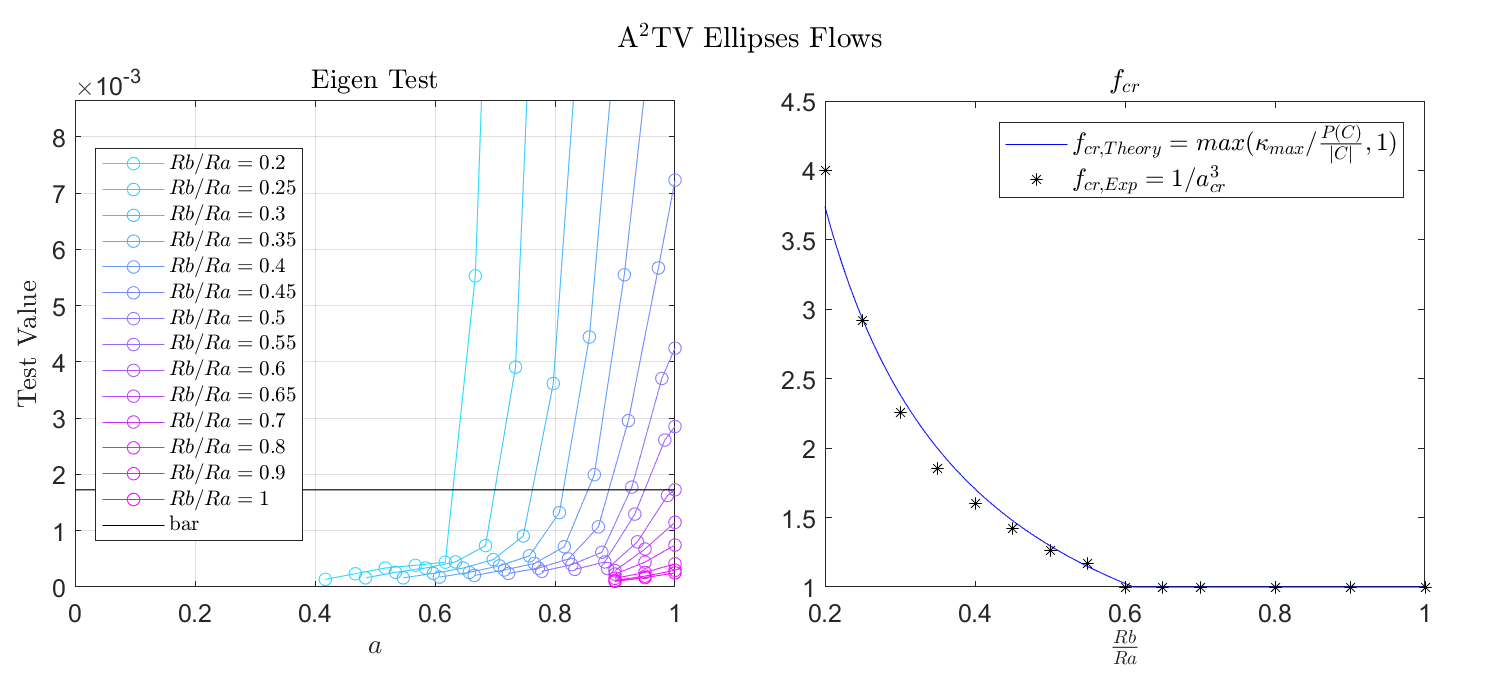}
\caption[A$^2$TV Ellipses Flows.]{Regularization by A$^2$TV of ellipses - a numerical experiment to asses the behavior of the function $f_{cr}(a)$. Here, numerous A$^2$TV tests were performed with anisotropic parameter $a\in (0.4,1]$, radii ratios of $R_b/R_a = (0.2,1]$, $R_a = 200$.}
\label{fig:Curvature_Conjecture_6_AATV_Ellipses_Flows}
\end{figure}


In Fig. \ref{fig:Curvature_Conjecture_6_AATV_Ellipses_Flows} we can see two plots. The left plot shows the eigenfunction test scores (lower is better) based on the following formula,
\begin{equation}
    T(u) = \frac{u(0,0) - u(R_a-\epsilon,0)}{u(0,0)},
\end{equation}
where $u$ is an ellipse image after applying a A$^2$TV flow at $t = \frac{1}{10\lambda^A}$ where $\lambda^A$ is taken from Eq. \eqref{eq_lambdaAATV}. The test score, typically in the range $T(u)\in [0,1]$, was chosen due to  its numerical robustness, as oppose to indicators which take into account the entire image, and are biased by numerical errors. We rely on the fundamental property of indicator set eigenfunctions which retain a constant value within the set. The time of the flow chosen for the computation is 10\% of the approximated extinction time. The value of $\epsilon = 0.05 \cdot R_a$ was used (for $R_a=200$, $\epsilon = 10$).

On the left, it can be observed that as the  ratio tends to 1 (towards a disk), the shape attains a better eigenfunction score.
The bar with the value $0.0017$ was chosen so that the ratio 0.6 is the last one to be an eigenfunction for the TV functional (which is the theoretical threshold for TV).
From the graph on the left, critical thresholds for $a$, $a_{cr}$ values, were extracted with the rule of choosing $a$ which gives the same test score $T(u)$ for each ellipse ratio.
The right plot shows the theoretical $f_{cr,Theory}$, which represents the actual bound A$^2$TV needs to compensate by, 
\begin{equation}
    f_{cr,Theory} = \max(\kappa_{max}/\frac{P(C)}{|C|},1),
\end{equation}
where the set $C$ represent an ellipse with a major axis $R_a$ and a minor axis $R_b$ as shown in Fig. \ref{fig:Curvature_Conjecture_Ellipse}. For each ellipse ratio, we took the critical $a_{cr}$ and applied it as a function of $f_{cr,Exp}$,

$$f_{cr,Exp}(a_{cr}) = \frac{1}{a_{cr}^3}.$$


Both images in Fig.\ref{fig:Curvature_Conjecture_6_AATV_Ellipses_Flows} present a strong indication that the upper bound dependence on the parameter $a$ is $\frac{1}{a^3}$. Note that this is the case for ellipses, it is hard to validate if curvature and convexity/shape structures can be completely decoupled.
%
We summarize this relation by the following conjecture,
\begin{conj}
Let $C \subset \mathbb{R}^2$ be a bounded ellipse  for which its indicator function is an eigenfunction in the sense of \eqref{eq_ef_AATV}, with $\kappa(x)$ the curvature at point $x\in \partial^*C$. Then,
\begin{equation}
\label{eq_AATV_curvature_upper_bound}
\max_{\forall x\in \partial^*C}\kappa(x)\leq \frac{\lambda_C^A}{a^4}=\frac{1}{a^3}\frac{P(C)}{|C|}.
\end{equation}
\end{conj}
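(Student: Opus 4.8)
The plan to turn this conjecture into a theorem is to convert the eigenvalue identity into a statement about a single smooth vector field and then analyze it on an ellipse by exploiting its two axes of symmetry. First I would pass from $\xi^A$ to $\tilde{\xi^A}=A\xi^A$, so that the eigenfunction requirement becomes the ordinary constant-divergence equation $\Div(\tilde{\xi^A})=\lambda_C^A$ on $C$ (this is exactly Eq.~\eqref{DivA_Constant} rewritten through \eqref{eq_tilde_xi^A}), while the admissibility bound $|\xi^A|\le 1$ turns into the anisotropic pointwise constraint $\tilde\xi_n^2/a^2+\tilde\xi_\tau^2\le 1$ in the local normal/tangent frame of the level sets of $\chi_C$, since $A$ has eigenvalue $a$ in the normal direction and $1$ in the tangent direction. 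The question then reads: for which curvatures does there exist a field with constant divergence $\lambda_C^A$, saturating the normal constraint only where geometry forces it, and meeting the boundary condition $\tilde\xi_n=a$, $\tilde\xi_\tau=0$ at the vertex $(R_a,0)$?

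Second, I would use the reflection symmetry of the ellipse across both axes to reduce this to a one-dimensional profile. The constant-divergence ansatz recorded in the excerpt, $\tilde\xi_x=R(x)R'(y)+t_1x$, $\tilde\xi_y=-R'(x)R(y)+t_2y$ with $t_1+t_2=\lambda_C^A$, already restricts the admissible fields; imposing $R(0)=0$ and the vanishing of $\tilde\xi_y$ on the major axis forces $\tilde\xi_x$ to be odd along that axis, consistent with the empirically observed cubic profile. I would then solve the boundary-matching problem along the major axis: $\tilde\xi_x(0)=0$ by symmetry, $\tilde\xi_x(R_a)=a$ by the calibrability condition (4), where the interior non-monotonicity (the ``bulge'' noted in the text) is exactly what lets the normal component stay below $a$ while the divergence remains constant. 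The threshold is the largest $\kappa_{max}$ (equivalently smallest $a$) for which this profile still satisfies $\tilde\xi_x^2/a^2+\tilde\xi_\tau^2\le1$ throughout $C$; saturation should first occur at the vertex, and equating the cubic's slope there with the constant divergence $\lambda_C^A=aP(C)/|C|$, together with the curvature-driven tangential spreading $\partial_y\tilde\xi_y\sim\kappa$, is what I expect to produce the relation $\kappa_{max}\le a^{-3}P(C)/|C|$.

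As consistency checks I would verify that under $x\mapsto sx$ both $\kappa_{max}$ and $P(C)/|C|$ scale like $1/s$, so \eqref{eq_AATV_curvature_upper_bound} is dimensionally admissible, and that at $a=1$ it collapses to the isotropic bound \eqref{eq:max_curvature_tv}. An alternative, more variational route would introduce the anisotropic perimeter $P_A(D)=\int_{\partial^*D}\phi_x(\nu_D)\,d\mathscr{H}^1$ induced by the frozen tensor $A$ and show that $C$ minimizes $\mathscr{G}^A_{\lambda_C^A}(D)=P_A(D)-\lambda_C^A|D|$ over $D\subseteq C$, mirroring Proposition~\ref{Prop_G}; its first variation would give a weighted Euler--Lagrange curvature condition, with the vertex weight supplying the powers of $a$.

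The main obstacle is precisely the step the excerpt flags as open. Unlike the TV case, the admissible field is not the naive image $A\hat{n}$ of the normal field (whose divergence is not even constant), so the simple comparison underlying Theorem~\ref{Theorem_Curv} is unavailable and the convexity/$\mathscr{G}_\lambda$ machinery does not transfer. The real content is a global existence-and-optimality statement for the non-monotone cubic field on the whole ellipse, and pinning the exponent down as exactly $3$ — rather than some looser power — requires controlling the coupling between the normal-direction scaling (one factor $a^{-1}$), the scaled divergence $\lambda_C^A=a\,P(C)/|C|$, and the tangential spreading near the high-curvature vertex. Because the construction genuinely uses both semi-axes, curvature and shape decouple cleanly only for ellipses, which is why I would state and prove the result in that restricted setting first, exactly as the conjecture does.
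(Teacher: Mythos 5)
First, be aware that the paper does not prove this statement: it is posed explicitly as a conjecture, and the only support offered is the numerical study of Figs.~\ref{fig:Curvature_Conjecture_6__2_AATV_Ellipses_Flows_Fixed_Ratio}--\ref{fig:Curvature_Conjecture_6_AATV_Ellipses_Flows}, where a critical anisotropy $a_{cr}$ extracted from eigenfunction-test scores on ellipses is fitted against $f_{cr,Exp}(a)=a^{-3}$. Your plan in fact retraces the paper's own heuristics quite closely --- the passage to $\tilde{\xi^A}=A\xi^A$, the constant-divergence ansatz $\xi_x=R(x)R'(y)+t_1x$, $\xi_y=-R'(x)R(y)+t_2y$, the cubic profile along the major axis, and the symmetry reduction all already appear in the subsection preceding the conjecture --- so you are not proposing a different route so much as promising to complete the one the authors left open.

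The genuine gaps are two. First, the decisive step is asserted rather than derived: you say that matching the cubic's slope at the vertex with $\lambda_C^A=aP(C)/|C|$ and the tangential spreading $\partial_y\tilde\xi_y\sim\kappa$ ``is what I expect to produce'' the exponent $3$. Nothing in the sketch forces that exponent over, say, $a^{-2}$ or $a^{-4}$; a naive count gives one factor of $a^{-1}$ from the normal rescaling and one factor of $a$ inside $\lambda_C^A$, and the origin of the remaining powers is precisely the open question. Second, and more structurally, the conjectured inequality is a necessity statement: eigenfunction $\Rightarrow$ curvature bound, i.e., above the threshold \emph{no} admissible calibrating field exists. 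The separable ansatz (stream function $\psi=R(x)R(y)$) is only a particular family of constant-divergence fields, so showing that the cubic profile violates $|\xi^A|\le 1$ does not rule out a non-separable calibration; your threshold criterion at best locates where one candidate fails. Your variational alternative faces its own obstruction: the tensor $A$ is built from $C$ itself, so it does not transform when you vary $D\subseteq C$, and there is no reason the frozen anisotropic perimeter $P_A$ is minimized by $C$ in the way Proposition~\ref{Prop_G} exploits. A smaller point: with the paper's $\Lambda_2$ the matrix is the identity off $\partial C$, so the anisotropic constraint $\tilde\xi_n^2/a^2+\tilde\xi_\tau^2\le 1$ holds only on the boundary, not ``throughout $C$'' as your saturation argument requires. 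None of this makes the plan wrong --- it is a sensible program, and your scaling and $a=1$ consistency checks are correct --- but as written it would not upgrade the conjecture to a theorem, and it does not go materially beyond the evidence the paper itself presents.
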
\\ 
Another way to phrase it, is that the maximal $a$, for which an indicator function of an ellipse $C$ is an eigenfunction, is,
\begin{equation}
\label{eq_AATV_a_max__Curvature}
a_{max} = \min\left(\sqrt[3]{\frac{Per(C)/|C| }{\max\{\kappa_{\partial C}\}}},1\right).
\end{equation}

\paragraph{Choosing $a$ for the set $C$}
If one would want a general set $C$ to be well preserved (approximately an eigenfunction) by the regularizer A$^2$TV, using Eq. \eqref{eq_AATV_a_max__NonConvexity} and Eq. \eqref{eq_AATV_a_max__Curvature}, we thus suggest to select the following $a$ parameter,
\begin{equation}
\label{eq_AATV_a_max}
a_{max} = \min\left(\frac{Per(D)}{Per(C)}, \sqrt[3]{\frac{Per(C)/|C| }{\max\kappa_{\partial C}}}\right),
\end{equation}
where $D$ is the convex hull of $C$. 

\section{The Spectral Framework}
\label{sec:spectral}

The spectral transform is based on a flow of a 1-homogeneous functional $J(u)$ similar to the TV and the  A$^2$TV flow described above. The general gradient flow is
\begin{equation}
\label{eq_Spectral_Framework_Flow}
u_t(t,x) = -p(t,x), \hspace{1mm} u(0,x) = f(x), \hspace{1mm} p(t,x)\in\partial_u J(u),
\end{equation}
where $f$ is assumed to be of zero mean, $t\in (0,\infty), \hspace{2mm}x\in\Omega$ and Neumann boundary conditions are used. 

The spectral transform is analogue to the Fourier transform in the sense of the transition from one domain to another, a spectral response which describes the energy at each scale ('frequency') and a filtering mechanism which enables  to diminish or amplify each scale band. With that analogy in mind, the transform is defined by
\begin{equation}
\label{eq_Spectral_Framework_Phi_Transform}
\phi(t)=u_{tt}t.
\end{equation}
With the inverse transform,
\begin{equation}
\label{eq_Spectral_Framework_inverse_Phi_Transform}
\hat{f}=\int_0^\infty\phi(t)dt
\end{equation}
The spectral response which is used here for visualization of the active scales is 
\begin{equation}
\label{eq_Spectral_Framework_Spectrum}
S(t) = \int_\Omega|\phi(t;x)|dx.
\end{equation}
It is worth mentioning there are several definitions to the spectrum response curve; among them is $S(t)=\langle f,\phi(t)\rangle$ which holds a Perseval-type relation described in \cite{Gilboa_spectv_SIAM_2014}.
Based on this framework, one can obtain a filtered responses using a filter $H(t)$:
\begin{equation}
\label{eq_Spectral_Framework_Filtering}
\phi_H(t;x) = H(t)\phi(t;x), \hspace{2mm} f_H(x) = \int_0^\infty\phi_H(t;x)dt,
\end{equation}
where $H(t)$ can be among other options, a low-pass filter (LPF) $H(t) = H_{LPF}^{t_c}$, with the following filtered response,
\begin{equation}
    H_{LPF}^{t_c} = \bigg\{
    \begin{array}{lc}
      1  , & t\geq t_c\\
      0  , & otherwise
    \end{array},
    \hspace{1cm}
    f_{H}(x) = \int_{t_c}^\infty \phi(t,x)dt,
\end{equation}
a LPF complement's high-pass filter (HPF) $H(t) = H_{HPF}^{t_c}$
\begin{equation}
    H_{HPF}^{t_c} = \bigg\{
    \begin{array}{lc}
      1  , & t\leq t_c\\
      0  , & otherwise
    \end{array},
    \hspace{1cm}
    f_{H}(x) = \int_0^{t_c} \phi(t,x)dt,
\end{equation}
and a band-pass filter (BPF) $H(t) = H_{BPF}^{[t_1,t_2]}$
\begin{equation}
\label{eq_BPF}
    H_{BPF}^{t_1,t_2} = \bigg\{
    \begin{array}{lc}
      1  , & t\in [t_1,t_2]\\
      0  , & otherwise
    \end{array},
    \hspace{1cm}
    f_{H}(x) = \int_{t_1}^{t_2} \phi(t,x)dt,
\end{equation}

%
For $f$ being an eigenfunction with an associated eigenvalue $\lambda$, the transform, defined by Eq. \eqref{eq_Spectral_Framework_Phi_Transform}, becomes
\begin{equation}
\label{eq_Spectral_Framework_Phi_Eigen_Deltas}
\phi(t)=\delta\left(t-\frac{1}{\lambda}\right)f.
\end{equation}

Here we use the definition of $A$ for A$^2$TV which applies for general images, Eq. \eqref{Eq_Wickert_D_tilde}.
\off{
In order to apply the spectral framework to the TV functional or the A$^2$TV one, we shall recall they are 1-homogeneous functionals as described in Eq. \eqref{eq_1homofunctional} and by using the sub-gradients described in Eq. \eqref{eq_subgradientTV} and Eq. \eqref{eq_subgradientAATV} with the flow described in Eq. \eqref{eq_Spectral_Framework_Flow} we could create the spectral transform.
\GG{==================}

\GG{This should be put in Section 3.5 (properties), for TV, we should comment this is well known.}
\begin{equation}
J_{TV}(\lambda u) = \int_\Omega |\nabla \lambda u(x)|dx = |\lambda| \int_\Omega |\nabla u(x)|dx = |\lambda| J_{TV}(u) .
\end{equation}
Now, we shall use the flow described in Eq. \eqref{eq_TVflow} where the sub-gradient $p$ in the spectral framework is,
\begin{equation}
p(t;x) = \Div(\xi(t;x))
\end{equation}
Where $\xi(t;x)$ admits $ \argsup_{\xi \in C_c^\infty, \|\xi\|_\infty\leq 1}\int_\Omega u(t;x)\Div\xi(t;x)dx$.

Similarly, for the $A^2TV$ case, we shall show its a 1-homogeneous functional,
\begin{equation}
J_{A^2TV}(\lambda u) = \int_\Omega |\nabla_A \lambda u(x)|dx = |\lambda| \int_\Omega |\nabla_A u(x)|dx = |\lambda| J_{A^2TV}(u).
\end{equation}
With the appropriate flow described in Eq. \eqref{eq_AATVflow} and the following sub-gradient $p$,
\begin{equation}
p(t;x) = \Div_A(\xi^A(t;x)). 
\end{equation}
Where $\xi^A(t;x)$ admits $ \argsup_{\xi^A \in C_c^\infty, \|\xi^A\|_\infty\leq 1}\int_\Omega u(t;x)\Div_A\xi^A(t;x)dx$.

\GG{================== until here}
}
In Fig \ref{fig:Spectrum_Illustration}  we show an example of $\phi$ bands of an image, which are narrow band-pass filters, in the sense of \eqref{eq_BPF}, of increasing scales. We compare spectral TV and spectral A$^2$TV (using $k=1$ in Eq. \eqref{Eq_Wickert_D_tilde}).
One can see that the main difference between the two functionals is that larger scales (we term ``objects'' here) have more distinct structures for A$^2$TV, preserving better the shapes of the original image.
Standard TV highly regularizes the large shapes, turning them into simple blobs.
\begin{figure}[H]
\centering
\includegraphics[width = 0.8\textwidth]{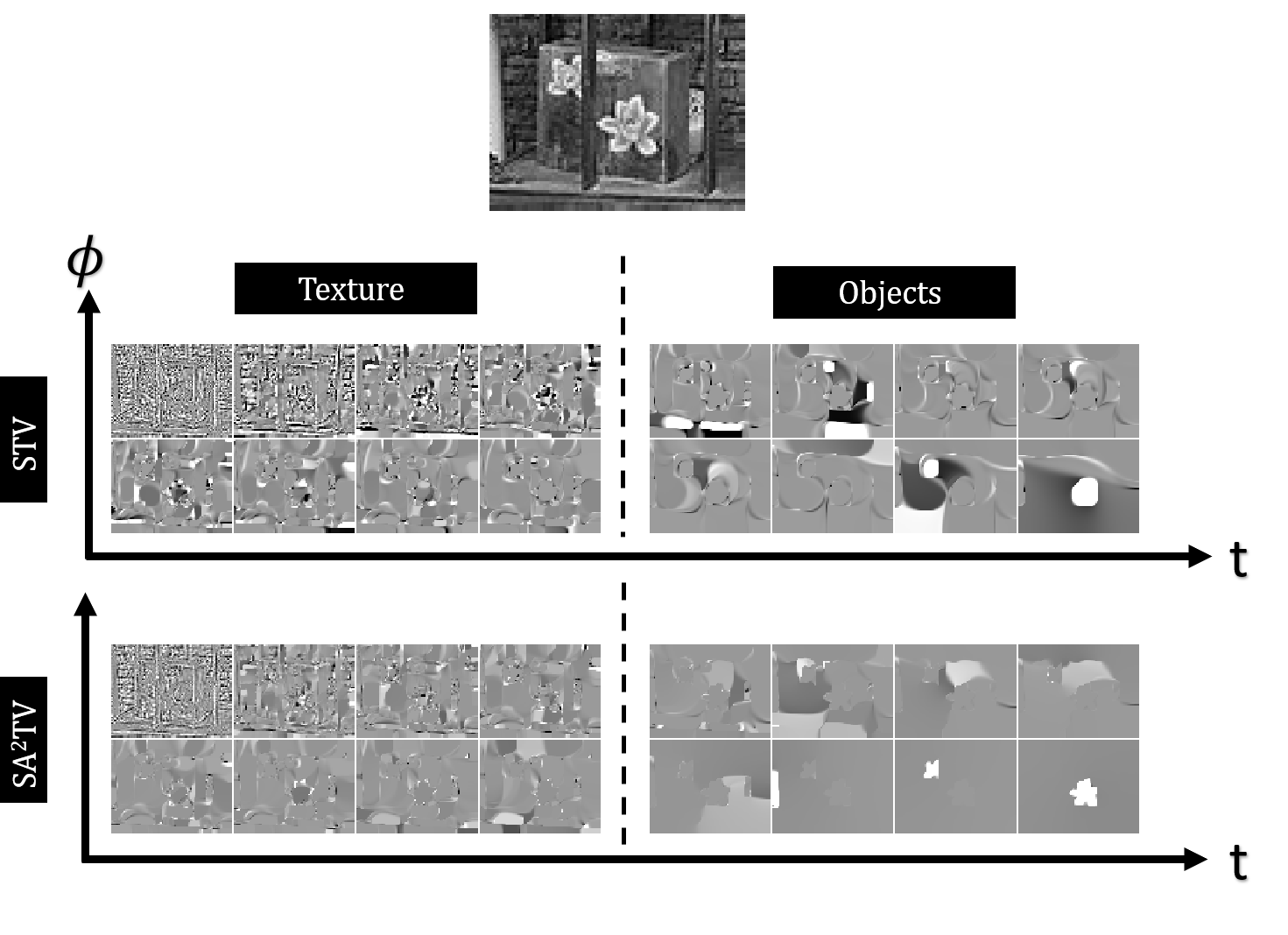}
\caption{
Spectral $\phi$ bands comparison of  spectral-TV and spectral-A$^2$TV.  The bands are divided into two main scale bands. On the left the textures of the image are shown; On the right, the macro textures, or objects. }
\label{fig:Spectrum_Illustration}
\end{figure}



\section{Numerical Schemes}
\label{sec:numerical}
One can use the anisotropic operators for gradient and divergence, defined in Eq. \eqref{eq_GradA} and Eq. \eqref{eq_DivA} to adapt and implement the flow and minimization numerically. This can be applied to almost any minimizing method like Chambolle's projection algorithm \cite{chambolle2004algorithm}, the primal dual method of Chambolle-Pock \cite{ChambollePock_PrimalDual} or FISTA \cite{FISTA_beck2009fast}.  

In general, unless noted otherwise , the matrix $A$  is calculated once according to  Weickert's shceme, Eq. \eqref{Eq_Wickert_D_tilde} and Eq. \eqref{Eq_Wickert_A},
based on the  input image. To calculate the matrix, we use the values $m=4$  and $c_m = 3.3$
for the function $c(x;K)$ , as suggested by Weickert in 
\cite{AnisoWeickert}. 
The default numerical framework used in this work was Chambolle's \cite{chambolle2004algorithm} due to its high precision and despite its relatively slow convergence rate.
The gradient flow of \eqref{eq_AATVflow} is solved as in \cite{Gilboa_spectv_SIAM_2014}, by iterative proximal steps. 
\off{
The way one does that is by taking the flow equation $u_t = - p$ where p is the sub-gradient of the appropriate one-homogeneous functional $J(u)$ and discretize using Moreau-Yosida implicit approximation \cite{moreau1965proximite} (which is unconditionally stable for $dt$) as follow,
\begin{equation}
    u_t = -p \rightarrow \frac{u_{n+1} - u_n}{dt} = -p(u_{n+1}) \rightarrow   dt p(u_{n+1}) + u_n  - u_{n+1}  = 0.
\end{equation}
We can see it is coinside with the Euler-Lagrange of the ROF problem described in Eq. \eqref{eq_AATVROFModel} where $t=dt$.
}
\subsection{Chambolle-Pock Scheme}
In \cite{ChambollePock_PrimalDual} a general scheme is proposed to solve problems of the form,
\begin{equation}
\min_{x\in X} F(Kx) + G(x).
\end{equation}
where $F,F^*:Y \rightarrow [0,\infty)$, $G,G^*:X \rightarrow [0,\infty)$, $K:X \rightarrow Y$. Here, $F^*$ and $G^*$ are the convex conjugate of $F$ and $G$. $K^*$ is the hermitian adjoint of the operator $K$. We assume that these problems have at least a solution $(\hat{x},\hat{y})\in X\times Y$.
The accelerated Arrow-Hurwicz algorithm is used.
Recalling the ROF-A$^2$TV energy Eq. \eqref{eq_AATVROFModel}, the definitions of $F$, $G$, $K$ and $K^*$ are,
\begin{equation}
\begin{aligned}
F(x) &= \|x\|_{L_1(\Omega)},\\
G(u) & = \frac{1}{2t} \|u-f\|_{L_2(\Omega)}^2,\\
K &= \nabla_A = \Grad_A\\
K^* &= \nabla_A^T = \Div_A.
\end{aligned}
\end{equation}

\subsection{Chambolle's Projection Algorithm}
The gradient-divergence pair defined in \eqref{eq_GradA} and \eqref{eq_DivA} allows us to modify the projection algorithm of \cite{chambolle2004algorithm} by the following,
\begin{equation}
\label{eq_Chambolle_projection}
 p_{i,j}^{n+1} = \frac{p_{i,j}^{n}+\tau \left( \nabla\left( \Div_A(p^n)-2\lambda f \right) \right)_{i,j}}{1+\tau \left| \left( \nabla\left( \Div_A(p^n)-2\lambda f \right) \right)_{i,j} \right|},
\end{equation} 
where the dual parameter is $p\in\partial J_{A^2TV}(u)$, $f$ is the original image, and $\tau$ is the time step. According to \cite{chambolle2004algorithm}, convergence is guaranteed for $\tau<1/8$ with the outcome of $u = f - \frac{1}{\lambda}\Div_A(p)$. 
It is worth noting that according to \cite{chambolle2004algorithm}, $\tau$ is limited by $\frac{1}{\|div_A\|^2_{L_2}}$. Since the eigenvalues of the matrix $A$ in our model are in the range $(0,1]$, the norm of the divergence can only diminish
; thus, one can maintain the same bound on $\tau$. 

\section{Numerical Experiments}
\label{sec:exp}


\begin{figure}[htb]
\centering
\includegraphics[width = 0.7\textwidth]{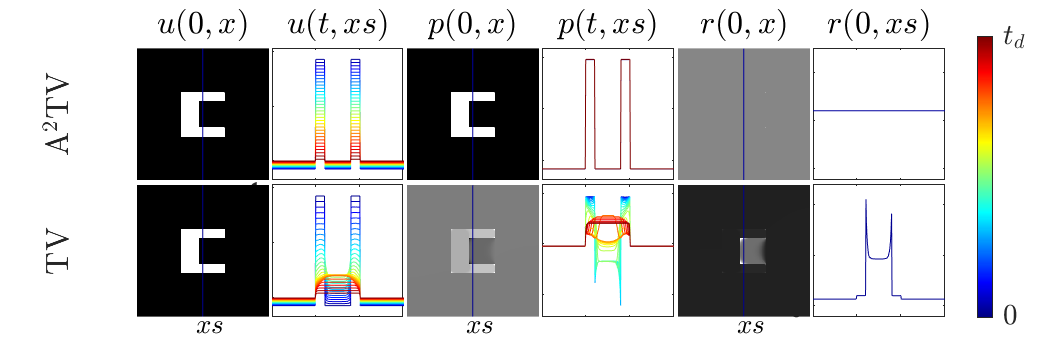}
\caption{Non-Convex EigenFunction. Top - A$^2$TV($a=0.25$) flow of a $C$ shape. Bottom - TV flow of a $C$ shape. The $C$ shape has a convex hull perimeter to shape perimeter ratio of 0.769. From left to right: the image $u$ at $t=0$, a vertical cross section of the image $u$ over time (blue is the start and red is the end of the flow), the sub-gradient $p$ at time 0, a cross section of the sub-gradient $p$ over time, the ratio image as described in Eq. \eqref{eq_ratio_p_over_u} and a vertical cross section of the ratio image.}
\label{fig:Non-Convex_1_EigenFunction}
\end{figure} 

\begin{figure}[htb]
\centering
\vspace{-10px}
\includegraphics[width = 0.65\textwidth]{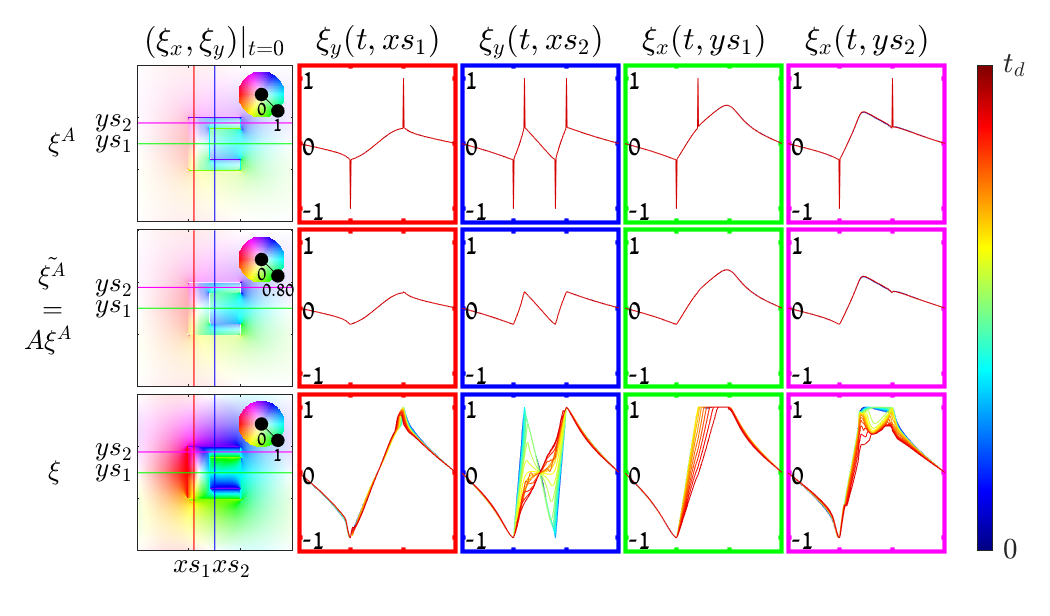}
\caption{Vector field $\xi$. Top and Middle, A$^2$TV flow of a $C$ shape. Bottom, TV flow of a $C$ shape. On the left are the $\xi$'s of the flows at $t=0$. The colored axes are cross-sections of $\xi$ at different evolution times $t$, where the time is color-coded by the bar on the right (for A$^2$TV the plots coincide and are occluded by the last instance in red) . Middle row, the applied tensor $A$ on $\xi^A$, Eq. \eqref{eq_tilde_xi^A}.}
\label{fig:Non-Convex_2_Xi}
\end{figure} 

\begin{figure}[htb]
\centering
\includegraphics[width = 0.7\textwidth]{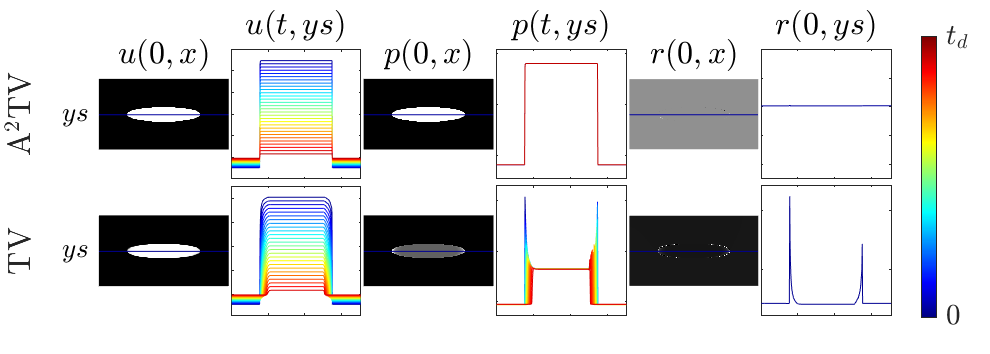}
\caption{Extreme-curvature convex eigenfunction. Top - A$^2$TV flow. Bottom - TV flow. For both flows, the initial condition is the same ellipse shape. From left to right: the image $u$ at $t=0$, a horizontal cross section of the image $u$ over time (blue is the beginning and red is the end of the flow), the sub-gradient $p$ at $t=0$, a cross section of $p$ over time, the ratio image $r$ at $t=0$, Eq. \eqref{eq_ratio_p_over_u}, and a vertical cross section of it.}
\label{fig:Extreme-Curvature_1_EigenFunction}
\end{figure} 
\sidecaptionvpos{figure}{c}
\begin{SCfigure}[1.1][htb]
\centering
\includegraphics[width = 0.45\textwidth]{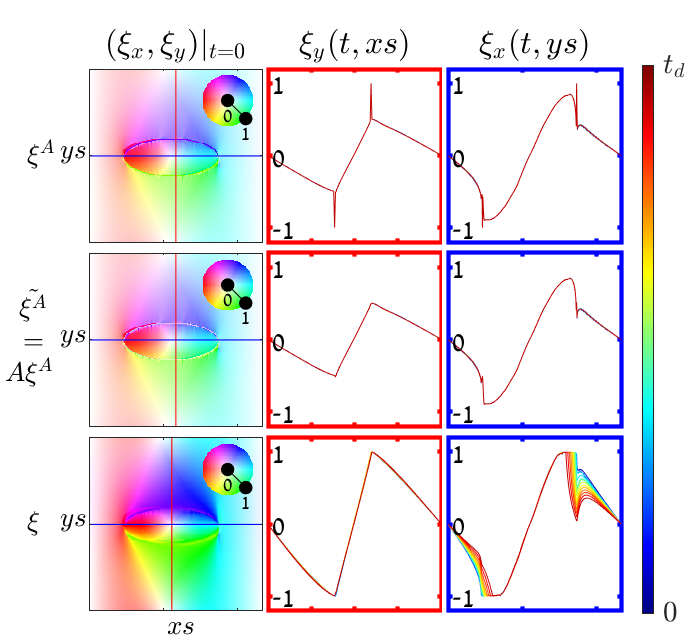}
\caption{Extreme-Curvature $\xi$. Top and Middle - A$^2$TV flow of an ellipse shape. Bottom - TV flow of an ellipse shape. On the left are the $\xi$'s of the flows at $t=0$. The colored axes are the variable $\xi$ cross-sections evolves through time $t$. The middle row is the applied tensor $A$ on the A$^2$TV flow $\xi^A$.}
\label{fig:Extreme-Curvature_2_Xi} 
\end{SCfigure}

\begin{figure}[htbp]
\centering
\includegraphics[width = 0.95\textwidth]{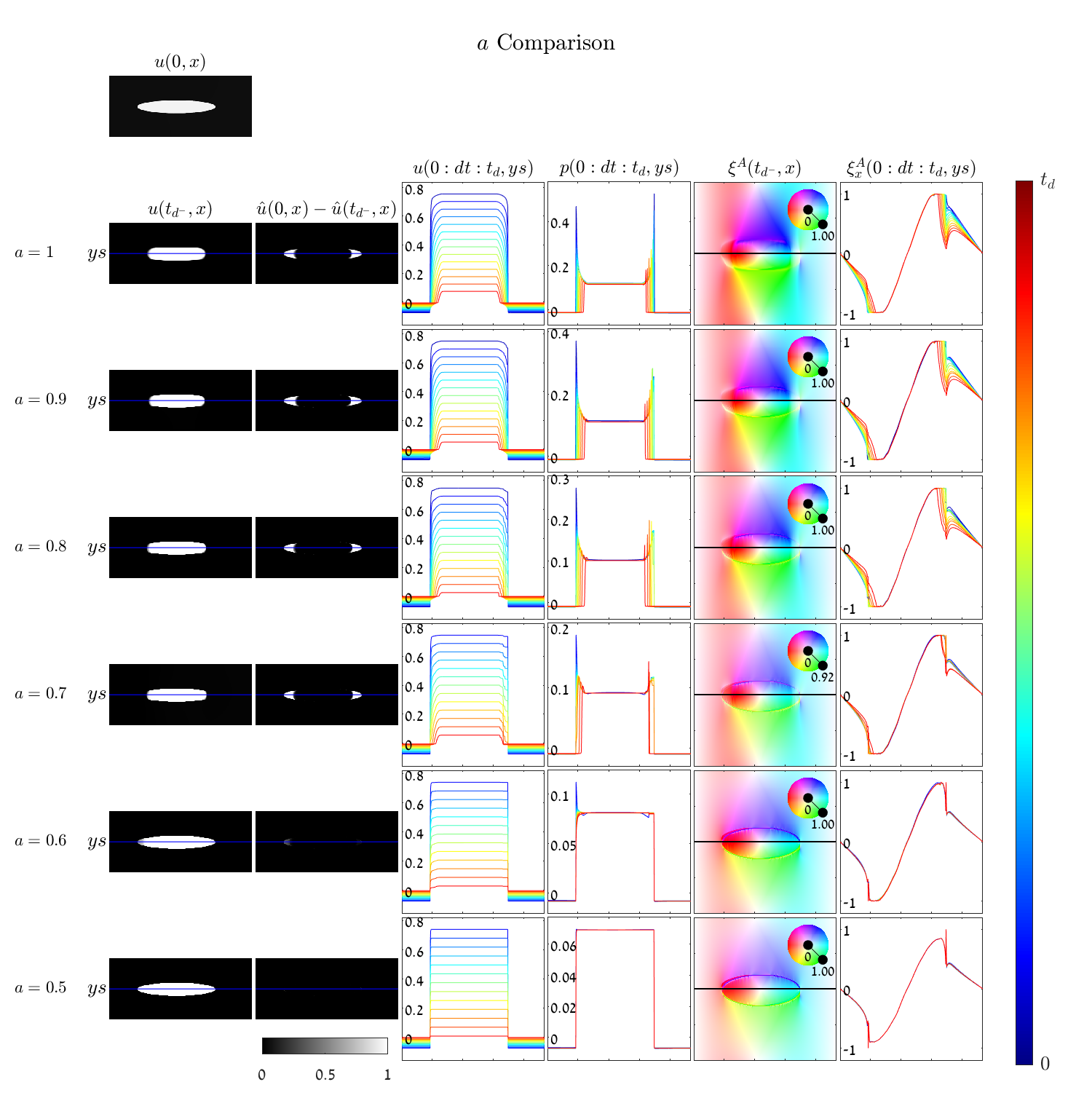}
\caption{Extreme-curvature for different values $a$. Top, ellipse image $u(t=0)$. Each row depicts an A$^2$TV flow with its corresponding $a$. From left to right, the image $u$ just before vanishing time, 
the difference image between start and vanishing times (normalized),
horizontal cross-section of the image $u$ over time $t$ along the flow, horizontal cross-section of the sub-gradient $p$, $\xi^A$ before vanishing time and a cross-section of $\xi^A$ along the flow.}
\label{fig:Extreme-Curvature_4_a_Comparison}
\end{figure} 

\begin{figure}[htb]
\centering
\includegraphics[width = \textwidth]{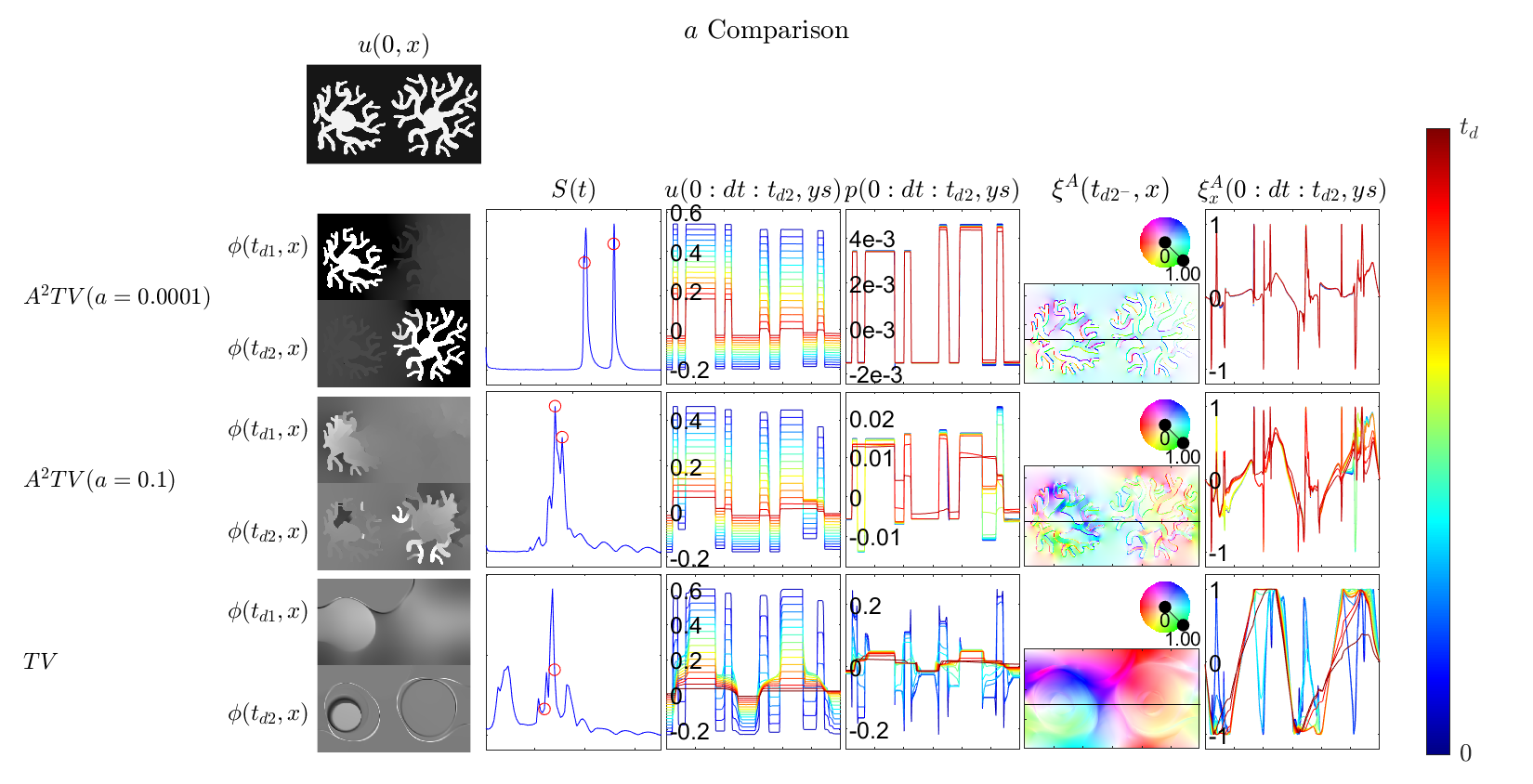}
\caption{Synthetic neuron-like structure - comparison for different values of $a$. Top is a pair of neuron-like shapes image $u(t=0)$. Each row is an A$^2$TV flow with its corresponding $a$. 
From left to right, the $\phi$ spectral component (narrow band-pass) just before the neuron shape vanishes, $t_{td1}$ for the right neuron and $t_{td2}$ for the left one, horizontal cross-section of the image $u$ over time $t$ along the flow, horizontal cross-section of the sub-gradient $p$ over time $t$ along the flow, $\xi^A$ just before the first neuron shape vanishes and a cross-section of $\xi^A$ over time $t$ along the flow. }
\label{fig:Synthetic_Example_5_Neurons_a_Comparison}
\end{figure}


In this section we present numerical experiments that illustrate and further validate theoretical issues discussed so far.
Let us define the ratio image
\begin{equation}
    r(t,x) := \frac{p(t,x)}{u(t,x)}.
    \label{eq_ratio_p_over_u}
\end{equation}
For eigenfunctions, we get $r = const = \lambda$.
Throughout this section we present the flow in terms of the image $u$, the sub-gradient $p$, the ratio $r$ and the dual variable vector field parts $\xi_x$ and $\xi_y$.

Plots of cross-sections at various time instances are super-imposed  with an appropriate color-bar, with $t \in [0,t_d]$, where $t_d$ denotes time of disappearance (extinction time). When a two-dimensional image of the vector field $\xi$ is shown, a color code is used to show its size and angle for each pixel (the code is shown by a circle on the top-right). To achieve high precision, 3000 projection iterations are used at each step.

\paragraph{Non-Convex Shape Experiments}
In order to better illustrate the difference between the A$^2$TV and TV flows, we applied both flows on an indicator function of a set which resembles a $C$ shape, with a ratio  $Per(hull(C))/Per(C)$ of $0.769$ (Eq. \eqref{eq_AATV_a_max__NonConvexity}) as can be seen in Fig. \ref{fig:Non-Convex_1_EigenFunction}. 
%
%
We recall the analytic solution of eigenfunctions under TV-flow, Eq. \eqref{eq_TVflowSolution}, and A$^2$TV-flow,  Eq. \eqref{eq_AATVflowSolution}, in which spatial structure is preserved and contrast is reduced linearly with time.
It can be clearly seen that
 TV-flow deforms the $C$ shape over time, indicating it is not an eigenfunction, as classical TV theory predicts for non-convex sets. On the other hand, A$^2$TV-flow keeps a steady sub-gradient, preserves the shape and  yields a constant ratio function $r$. 
This is a strong numerical indication that the shape is an eigenfunction of A$^2$TV, in the sense of \eqref{eq_ef_AATV}. 
The flow can also be investigated by examining the dual domain vector field  $\xi$.
In Fig. \ref{fig:Non-Convex_2_Xi} the variables $\xi^A$,$\tilde{\xi^A}$ are shown. Different instances in time are overlaid. One can observe they coincide with almost no change throughout the A$^2$TV flow, while $\xi$ is being distorted in the TV flow.
The row in the middle shows $\tilde{\xi^A}$, which is the variable the divergence is applied on, similar to $\xi$ in TV. 


\paragraph{High Curvature Experiments}
In this section we want to demonstrate how extremely curved objects can still be eigenfunctions of A$^2$TV and compare these to the TV case.
%
%
%
In Fig. \ref{fig:Extreme-Curvature_1_EigenFunction}, it can be seen that the ellipse linearly diminishes over time in the A$^2$TV flow while reducing the curvature of the ellipse in the TV flow. The constant ratio image (up to numerical errors) and the constant sub-gradient strongly validate that this highly curved ellipse is an eigenfunction of the A$^2$TV functional.

In Fig. \ref{fig:Extreme-Curvature_2_Xi} the  dual variable is shown.
Note that the magnitude of the variable $\tilde{\xi^A}$, Eq. \eqref{eq_tilde_xi^A}, is smaller than 1 on the set boundary and may be close to 1 inside the set, compensating for high curvatures, as seen along the $x$ axis (blue frame in middle row, right).


In Fig. \ref{fig:Extreme-Curvature_4_a_Comparison} we demonstrate the impact of the parameter $a$ on the flow variable $u$, the sub-gradient $p$ and the dual variable $\xi^A$.  Results of several A$^2$TV flow processes, Eq. \eqref{eq_AATVflow}, with different parameter $a$  are shown. As $a$ descreases the ellipse structure becomes less regularized, where for $a=0.5$  the ellipse becomes a numerical eigenfunction and its shape is completely retained through the flow.
\off{
\GG{Put here a different explanation.}
In Fig. \ref{fig:Extreme-Curvature_4_a_Comparison}, we demonstrate how the change of the anisotropy parameter $a$  affects $p$, $\xi^A$ and the flow $u(t,x)$.
It can be seen that as $a$ decreases, the strong regularization of high curvature regions diminishes, until for some value of $a$ (here $a=0.5$, bottom row) we get that $u$ is an eigenfunction ($p=\lambda u$), and the flow decreases linearly, as can be predicted from \GG{ADD eq. of flow for eigenfunctions from theoretical section.} 
}

\paragraph{Synthetic Example - Neuron-like Structure}
The purpose of this example is to examine a complex, non-convex and highly curved shape. We also demonstrate the result of A$^2$TV spectral filtering on such shapes. In this example, the use of BPF's of the spectral framework,  Eq. \eqref{eq_BPF}, were used at scales $t$ near the vanishing time of each neuron. As $a$ diminishes  the spectrum $S(t)$ is more sparse, retaining most of the structures within distinct scales with compact support in $t$. For small enough $a$, one can see a linear decay of the image $u$ over time. Similar to previous experiments,
 $p$ and $\xi^A$ do not change over time, as can be expected from an eigenfunction.
For different values of $a$, note the different regularization that occurs, in terms of both  convexity and curvature.  
\clearpage

\section{Applications}

\begin{figure}[htb]
\centering
\includegraphics[width = 0.9\textwidth]{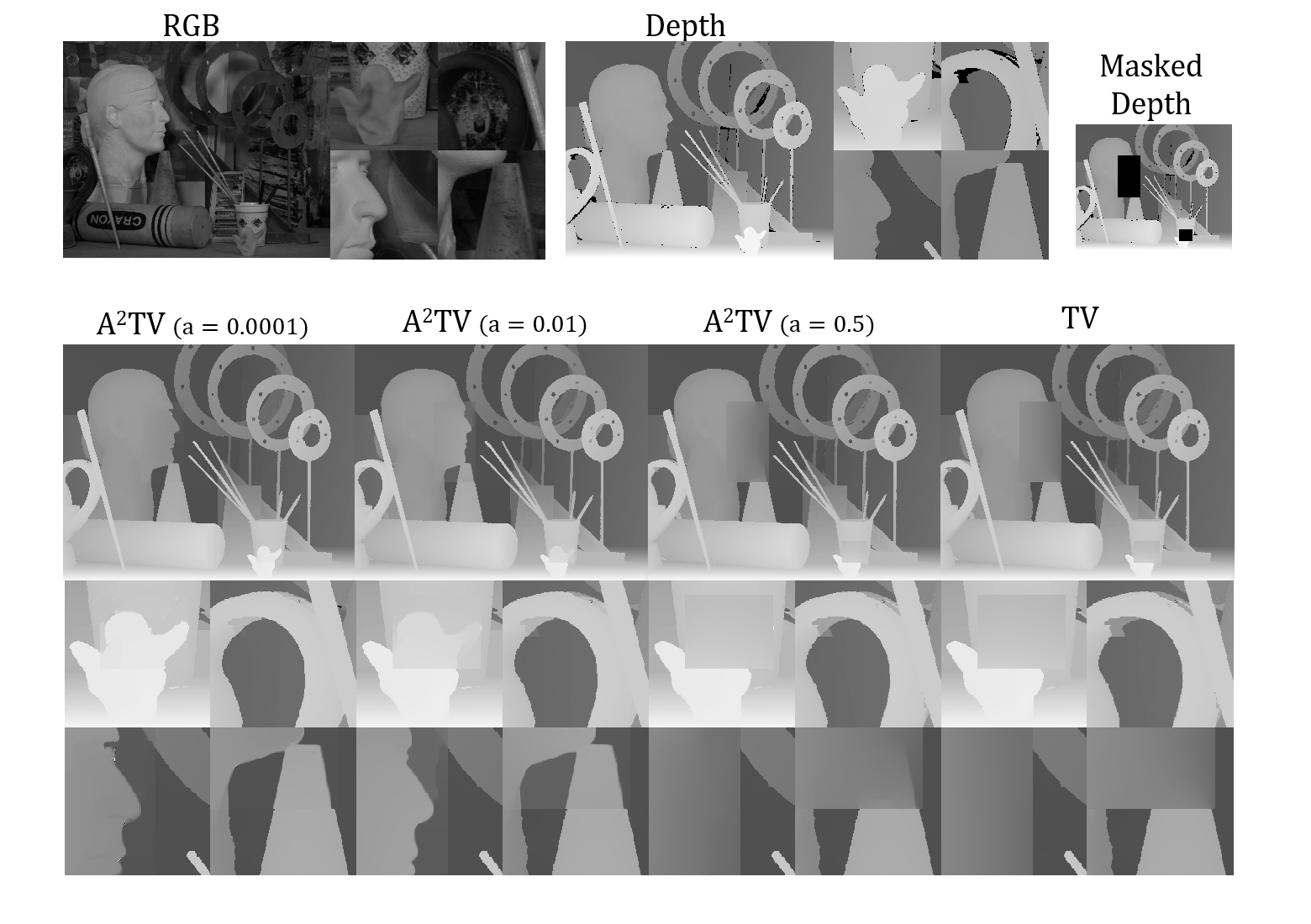}
\caption{Image-guided depth inpainting - $a$ Comparison. Top row, left to right: RGB image, original depth image and the masked depth image. Second row: Inpainting with the A$^2$TV functional with various $a$ parameter values and one with the TV functional. }
\label{fig:APP_Depth_Inpainting}
\end{figure} 

\begin{figure}[htb]
\centering
\includegraphics[width = \textwidth]{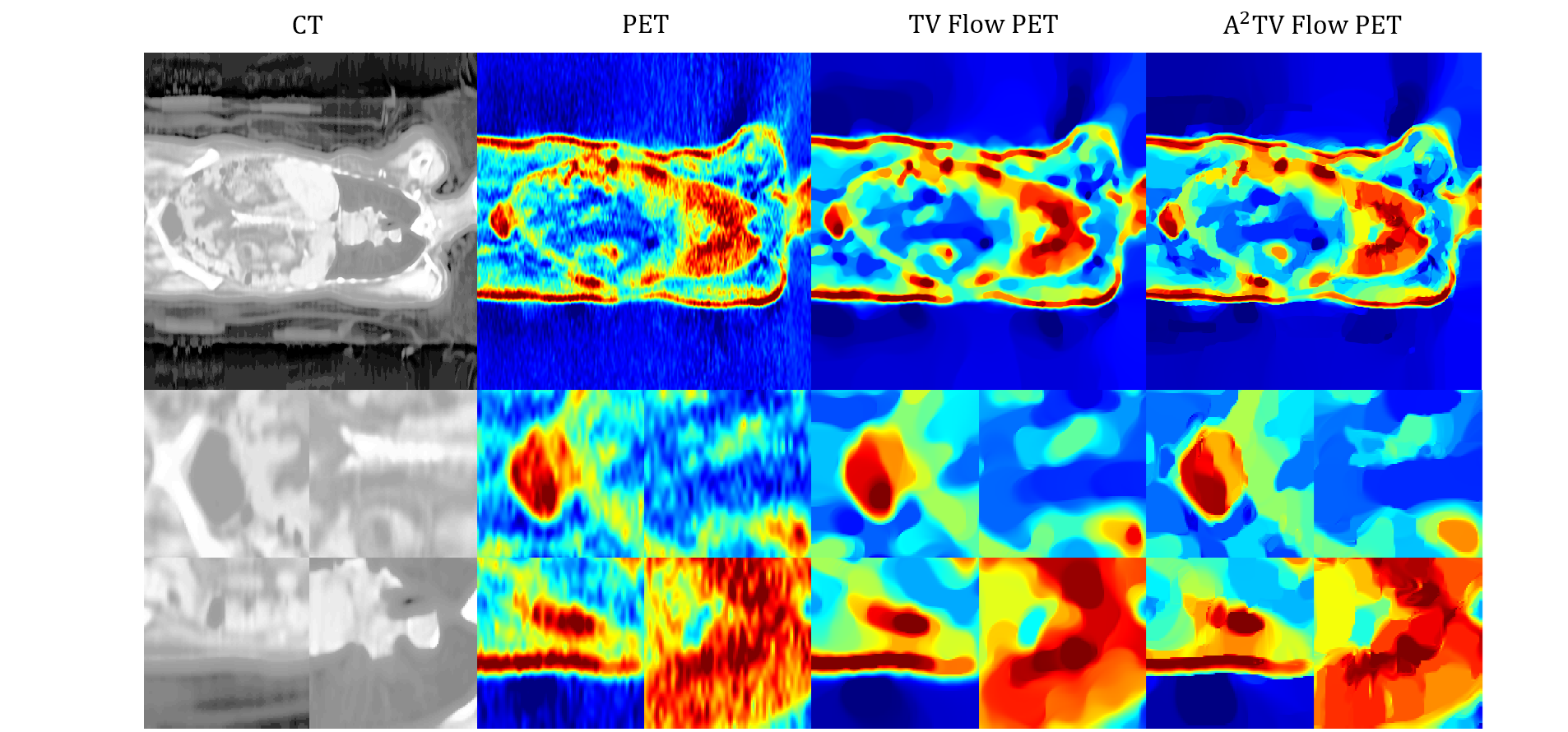}
\caption{PET/CT Model Fusion. High resolution information on the anatomy of the patient is injected to the PET regularization through the tensor $A$, which is based on the registered CT image. }
\label{fig:APP_PETCT_Fusion}
\end{figure} 

\begin{figure}[htb]
\centering
\includegraphics[width = 1\textwidth]{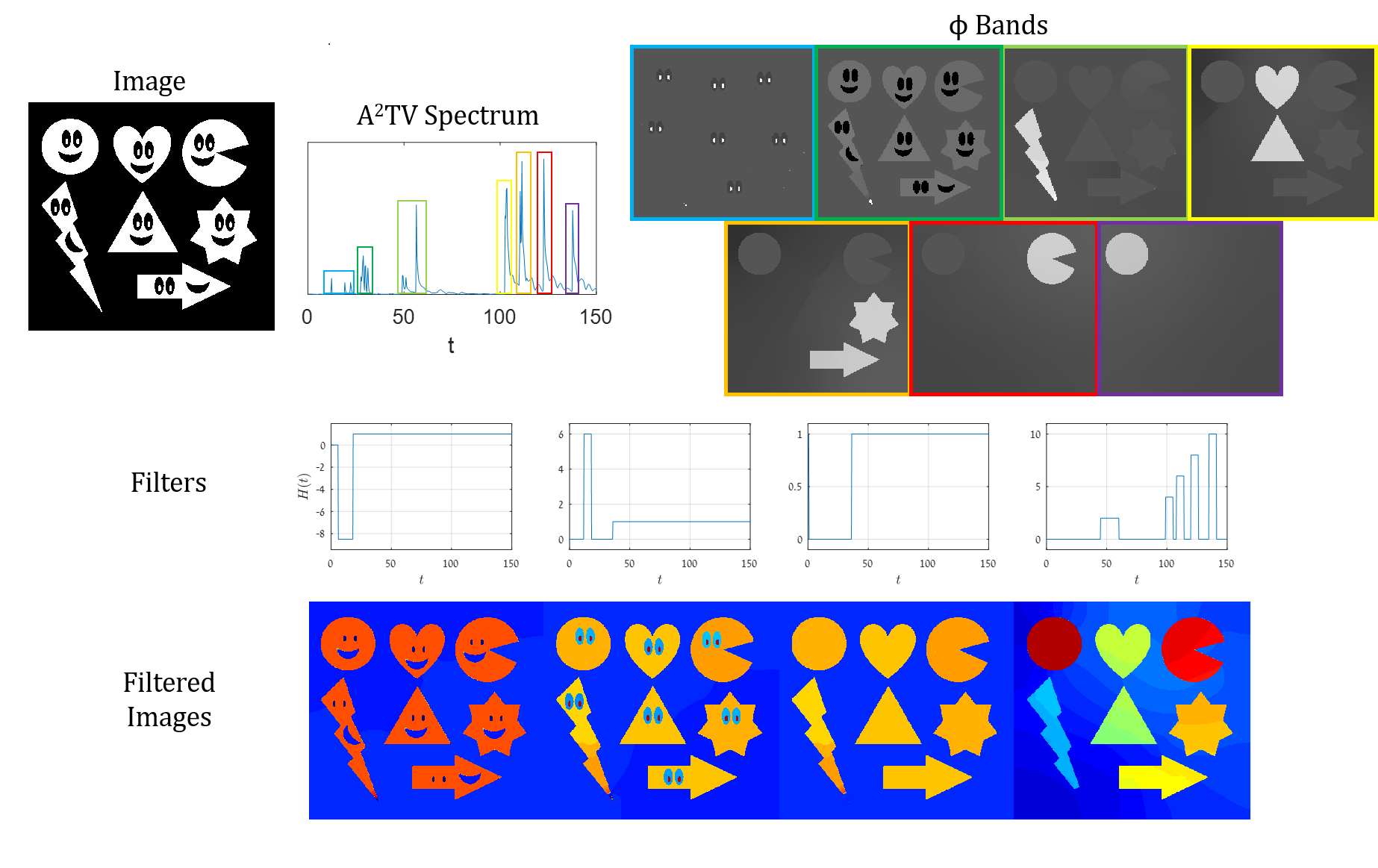}
\caption{Spectral A$^2$TV - Shapes and Faces. Top left to right - An image of objects with faces; its $A^2TV$ spectrum and $\phi$ bands of a specific locations in the spectrum. Middle and bottom - Filters applied on the spectrum and the resulting filtered image beneath each one.
}
\label{fig:APP_SAATV_faces}
\end{figure} 

\begin{figure}[htb]
\centering
\includegraphics[width = 1\textwidth]{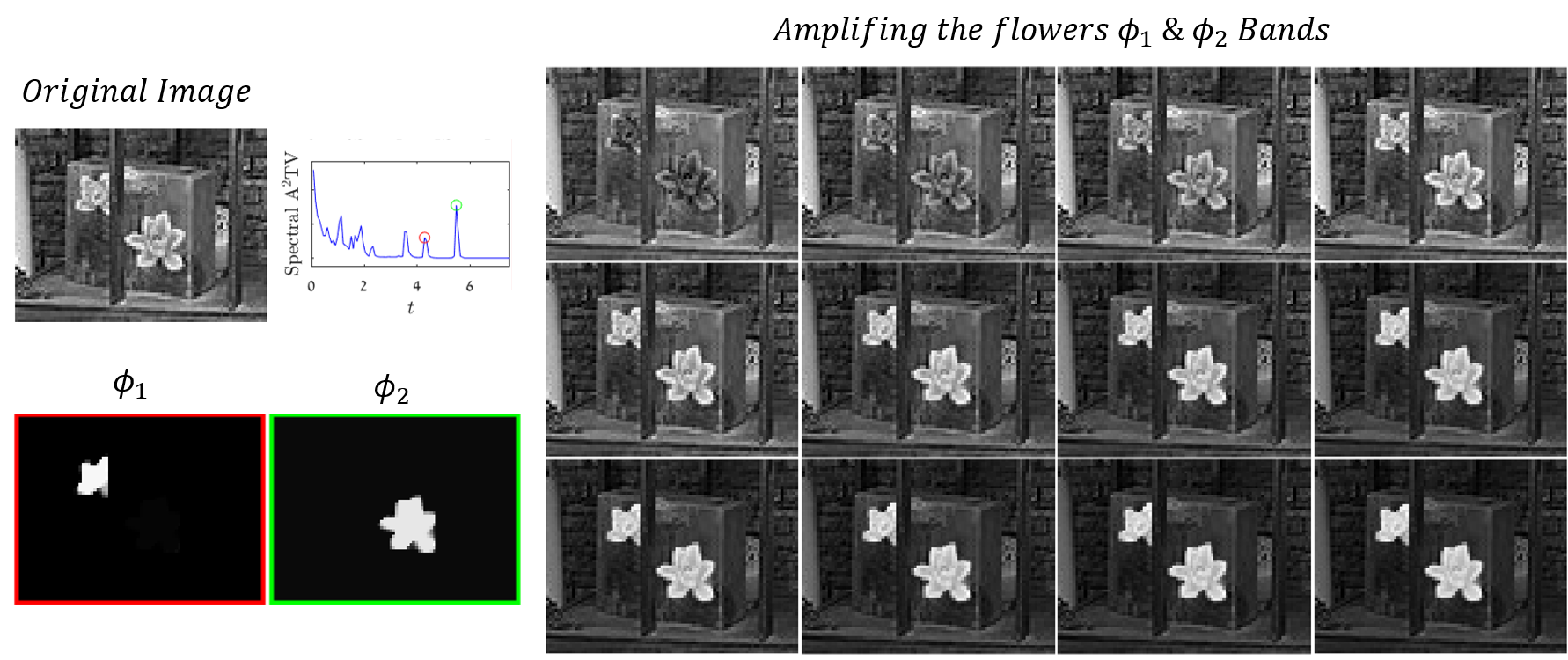}
\caption{Spectral A$^2$TV - Attenuating and enhancing object bands. Spectral peaks in the spectrum (red, green circles) correspond to the small and larger flowers, respectively. These bands are filtered out completely at the top left image. The bands are then added back by  33\% increments  (from left to right, top to bottom).
}
\label{fig:APP_FlowersBox}
\end{figure} 

We show below several applications which can benefit from the A$^2$TV functional, following our analysis, in a minimization and spectral frameworks. 
\subsection{Depth Inpainting}
In this application, we construct the adapted matrix (Tensor) $A$ from the RGB image,and apply it to the depth image. The energy minimization is based on \cite{App_Inpainting_li2011fast}, where now A$^2$TV replaces TV. The Energy being minimized is, 
\begin{equation}
\label{eq_InpaintingEnergy} 
\min_{u\in BV(\Omega)}\mathscr{E}(u) = J_{A^2TV}(u) + \frac{1}{2}\int_\Omega\hat{\mu}(x)\left(u(x) - f(x)\right)^2dx,\hspace{3mm} A(x) = A(I(x)),
\end{equation}
%
where $u$ is the inpainted depth image, $f$ is the input depth (with missing values), $I$ is the intensity of the RGB image and $\hat{\mu}$ is an inverse indicator of the inpainting regions with zero value in regions to be inpainted and $\mu \gg 1$ otherwise.
Standard projection algorithms cannott solve Eq. \eqref{eq_InpaintingEnergy} directly since $\hat{\mu}$ takes zero values  in $\Omega$ (
recalling $u = f - \frac{1}{\mu}\Div_A(p)$ in the projection algorithm).
This is resolved in \cite{App_Inpainting_li2011fast} by solving two subproblems alternately and iteratively,
\begin{equation}
\label{eq_InpaintingSubproblem1} 
\min_v J_{A^2TV}(v) + \frac{1}{2\theta}\int_\Omega\left(u - v\right)^2dx,
\end{equation}
\begin{equation}
\label{eq_InpaintingSubproblem2} 
\min_u \int_\Omega\left(u - v\right)^2dx + \frac{1}{2}\int_\Omega\hat{\mu}\left(u - f\right)^2dx.
\end{equation}
Problem \eqref{eq_InpaintingSubproblem1} can be solved by the adaptation of Chambolle's projection method \cite{chambolle2004algorithm}, Eq. \eqref{eq_Chambolle_projection}.
Problem \eqref{eq_InpaintingSubproblem2} has the following closed-form solution,
\begin{equation}
\label{eq_InpaintingSubproblem2_sol} 
u = \frac{\hat{\mu}\theta f + v}{1 + \hat{\mu}\theta}.
\end{equation}
In our scheme, the adapted anisotropic matrix $A$ is calculated once at the beginning from the RGB image, $p$ is initialized at first with zeros and later with the result of previous iteration, $v$ is therefore derived from $p$ by $v = u - \theta\Div_A(p)$. Moreover, $u$ is initialized as $f$, and the parameters are $\mu = 80, \theta = 5, \tau = 1/8$ with 6000 iterations. At each alternating session, the projection step is iterated 5 times. 
The resulting image in Fig. \ref{fig:APP_Depth_Inpainting} with the smallest $a$ clearly shows how the A$^2$TV regularizer is able to reconstruct the required geometry given a small enough parameter $a$.

\subsection{PET/CT Model Fusion}
Here, we construct the $A$ tensor from the CT image, and solve a TV and an A$^2$TV minimization problems. In the A$^2$TV minimization process,  information regarding the structure of the PET scan is taken from the detailed CT image via the tensor $A$. The energy to be minimized is,
\begin{equation}
\label{eq_PETCT} 
\min_{u\in BV(\Omega)}\mathscr{E}(u) = J_{A^2TV}(u) + \frac{\mu}{2}\int_\Omega\left(u - f\right)^2dx,\hspace{3mm} A(x) = A(CT(x)),
\end{equation}
where $f$ is the PET image, $\mu = 5/3$ and the adapted matrix parameter $k = 0.1$. Results are shown in Fig. \ref{fig:APP_PETCT_Fusion}.

\subsection{Spectral A$^2$TV - Shapes and Faces}
As an application to the spectral A$^2$TV, we chose the "Shapes and Faces" cartoon image in order to better demonstrate how non-convex and high curvature shapes are decomposed and can be manipulated using the spectral A$^2$TV framework.
Various band-pass filters, Eq. \eqref{eq_BPF}, are used  to isolate shapes and certain face parts of different scales.
As can be seeing in the figure, by applying different filters we are able to extract different parts of the image and apply various filters (middle row) yielding different semantic interpretations (bottom row).
\subsection{Spectral A$^2$TV - Attenuating and enhancing "low frequency" object bands}
We now show manipulation of a natural image based on A$^2$TV spectral representation. Sharp peaks in the spectrum $S(t)$ at high $t$ (corresponding to low eigenvalues or "low frequencies") are isolated by narrow band-pass filters. See spectrum at top row, second plot and shapes corresponding to the first and second peaks at the bottom row, left.
One basically obtains a segmentation of the flowers. 
%
%
We can now perform filtering on the appropriate bands, increasing or decreasing the contrast of the flowers. 
We note that one is not able to process objects in a similar manner using spectral TV as it would have distorted the flowers' segmentation to a disk-like shapes and  filtering would not be accurate.

\off{
\subsection{Optoacoustic Spectral A$^2$TV based Reconstruction}
Here, we modified the SA$^2$TV support image reconstruction from opto-acoustic signals.
This example is taken from a joint work with colleges that will be published soon.
\begin{figure}[H]
\centering
\includegraphics[width = 0.9\textwidth]{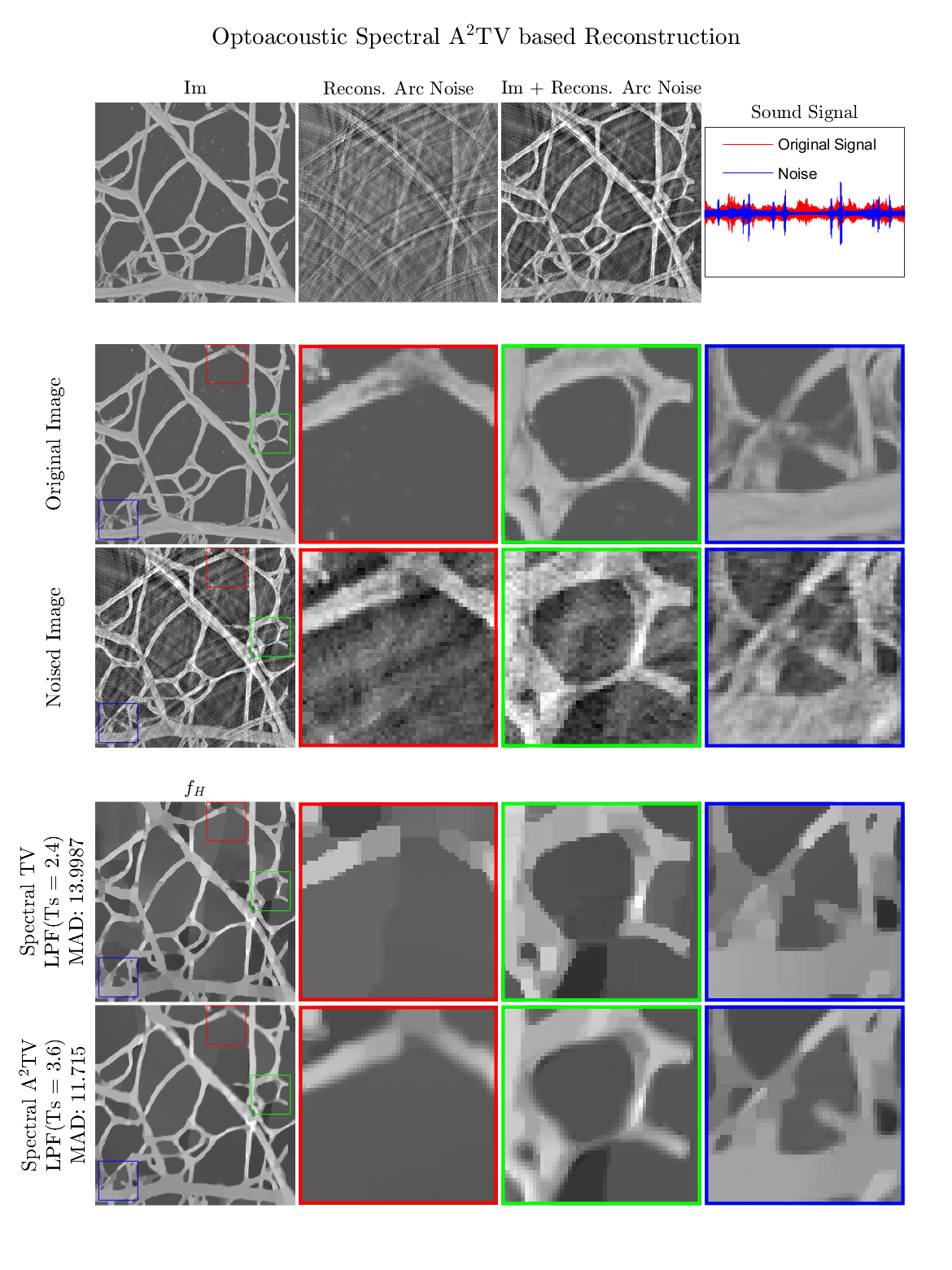}
\caption{Optoacoustic Spectral A$^2$TV based Reconstruction.
\GG{For this application a lot of effort was dedicated. I expected results to be a little better. 
For some reason the AATV is quite blurry. Unlike the synthetic example you showed before, it does not keep the details that well. I don't know why, look like the matrix $A$ was not computed that well.
Probably not easy to correct now, are there other results? At this stage not sure this is worth a full page figure.
}
}
\label{fig:APP_Optoacoustic}
\end{figure} 
}

\section{Conclusions}
In this work, the A$^2$TV functional was analyzed and characterized, focusing on the type of shapes it can perfectly preserve. These shapes are nonlinear eigenfunctions in the sense of \eqref{eq_ef_AATV}. It is shown that
A$^2$TV can be formulated as a smooth generalization of the TV functional, by changing the anisotropy parameter $a \in (0,1]$. 
Its eigenfunctions are a superset of those of TV, where as $a$ tends to zero more complex structures are admissible. We have proved that the convexity measure \eqref{eq:conv_C} can be used to characterize the most nonconvex structure that is an admissible eigenfunction for a certain $a$. With respect to curvature, we have shown numerically the tendency of higher curvatures to be admitted, approximately inversely proportional to $a^3$, compared to isotropic TV. Future theoretical research is required to formally validate this.

Those findings provide better insight on the functional and how to tune its parameters. Since this is a one-homogeneous functional, we were able to generalize spectral TV \cite{Gilboa_spectv_SIAM_2014} to spectral A$^2$TV, following the theory of \cite{Gilboa2016}. This type of nonlinear decomposition is well suited not only for filtering the textural parts of the image, but the objects as well. Finally, several applications that may benefit the use of A$^2$TV were suggested.

\off{
from 1, which reduces the A$^2$TV to TV, to almost zero), and a much more robust in the sense of maintaining objects within an image without a tendency to change the nature of an object regardless of its convexiness and its curvature.
Our research provided a way to evaluate the $a$ parameter or at-least better understand its implications on the degree of the convexiness and curvature of objects within an image.
A dual formulation of the A$^2$TV has been also introduced, which helped adjusting the existent TV based algorithms in our work.
Moreover, the spectral A$^2$TV has been presented and its attributes has been shown. Thereby enabling the full use of the spectrum to not only accommodate the texture information of the image, but also the object one as well. 
In order to show the massive advantages of this functional and the knowledge that the theory brings to the table we handpicked several applications that we thought brings the best out of the A$^2$TV functional.

It goes without say that the curvature property needs to be analytically proven, probably by using tools outside of the convex sets field work.  
}

\section{Appendix I - A$^2$TV Disk Example}
We define the following function,
\begin{equation}
f(X) = \Bigg\{
\begin{array}{lc}
hc_0  & |X|\leq R\\
h(c_0-1) & R\leq |X|\leq R_0
\end{array}.
\end{equation}
The shape of $f$ is basically a disk of radius $R$ and height $h$, within a circular domain $\Omega$ of radius $R_0$. Here, $c_0=1-R^2/R_0^2$ was chosen so the mean value is $\bar{f}=0$.
We define the following matrix $A$ as in \eqref{eq_Dtilda},
\off{
\begin{equation}
\label{eq_AVD}
A = V^TDV,\hspace{3mm} V = [v_1;v_2],\hspace{3mm} 
D = 
\begin{cases}
\begin{pmatrix}
a &0 \\ 0 &1
\end{pmatrix}\textrm{, on} \hspace{2mm} \partial\Omega\\
I \hspace{13mm} \textrm{, otherwise}
\end{cases}.
\end{equation}
}
with $\partial C: |X|=R$. Eq. \eqref{def_V_D} can be written as,
\begin{equation*}
v_1 = \frac{X}{|X|} = \frac{X}{R} = \frac{1}{R} \svec{x}{y}, \hspace{4mm}
v_1 \perp v_2 \Rightarrow v_2 = \frac{1}{R} \svec{y}{-x} , \hspace{4mm}
V = [v_1;v_2] = \frac{1}{R} \smat{x}{y}{y}{-x},
\end{equation*}
\begin{equation*}
A = V^T \Lambda V = \frac{1}{R}\smat{x}{y}{y}{-x}\smat{a}{0}{0}{1}\frac{1}{R}\smat{x}{y}{y}{-x} = \frac{1}{R^2}\smat{ax^2 + y^2}{axy-xy}{axy-xy}{ay^2+x^2} .
\end{equation*}
In this eigen-problem the eigenvalue of $v_1$ is $a$ and of $v_2$ is $1$.
Let us assume a certain vector field $\xi^A$ and show it is a calibrable set. On the disk boundary $\partial C$ we require,
\begin{equation*}
\xi^A|_{\partial C} = \frac{1}{R}\svec{x}{y} = \frac{X}{R} .
\end{equation*}
The divergence operator is applied on  $A\xi^A$, therefore we impose continuity to get,
\begin{equation}
\xi^A = 
\begin{cases}
a\frac{X}{R} &, |X|<R,\\
\frac{X}{R} &, |X|=R,\\
a\frac{R}{c_0}\left(\frac{1}{|X|^2}-\frac{1}{R_0^2}\right)X &, R_0>|X|>R.\\
\end{cases}
\end{equation}
We compute  $p = \Div_A \xi^A = \Div (A\xi^A).$ Thus we obtain,
\begin{equation}
p = 
\begin{cases}
\frac{2a}{R}  &, |X| \le R,\\
-a\frac{2R}{c_0R_0^2}   &, R_0>|X|>R.\\
\end{cases}
\end{equation}
One can validate the solution admits $\|\xi^A\|_\infty \le 1$:
\begin{equation}
\|\xi^A\|_\infty = 
\begin{cases}
\norm{a\frac{X}{R}}_\infty < a \le 1 &, |X|<R\\
1  &, |X|=R\\
\norm{a\frac{R}{c_0}\left(\frac{1}{r^2}-\frac{1}{R_0^2}\right)X}_\infty 
\underbrace{<}_* a \le 1 &, R_0>r=|X|>R\\
\end{cases} \Rightarrow \|\xi^A\|_\infty \le 1 .
\end{equation}
\[
\begin{aligned}
* &= 
\norm{a\frac{R}{c_0}\left(\frac{1}{r^2}-\frac{1}{R_0^2}\right)X}_\infty = 
\norm{a\frac{R}{r^2}\underbrace{\left(\frac{R_0^2-r^2}{R_0^2-R^2}\right)}_{<1}X}_\infty <
\norm{a\frac{R}{r}\frac{X}{r}}_\infty = a\underbrace{\frac{R}{r}}_{<1}\underbrace{\norm{\frac{X}{r}}_\infty}_{=1}<a.
\end{aligned}
\]
\off{
One can also validate that $p$ is a subgradient element  (for the case of $R_0\rightarrow\infty$), 
\[
p\in\partial J(f) \Leftrightarrow \forall u \in BV: J(u)\geq J(f) + \langle p,u-f\rangle
\]
gives us,
\[
\begin{aligned}
\langle p,u-f\rangle &= \langle p,u\rangle - \langle p,f\rangle =  \langle \Div_A\xi^A,u\rangle - J_{A^2TV}(f) = \sup_{\xi^A,\norm{\xi^A}\le 1} \langle \Div_A\xi^A,u\rangle - J_{A^2TV}(f) \\&= J_{A^2TV}(u)- J_{A^2TV}(f) \Rightarrow p\in\partial J_{A^2TV}(f)
\end{aligned}
\]
using,
\[
\begin{aligned}
 \langle p,f\rangle
 = J_{A^2TV}(f) = ah 2 \pi R
\end{aligned}
\]
} 
The eigenvalue can be computed by, 
\[
J_{A^2TV}(f) = \langle p,f\rangle = \langle \lambda f,f\rangle \Rightarrow \lambda = \frac{J_{A^2TV}(f)}{\norm{f}^2} = \frac{ah2\pi R}{h^2\pi R^2} = a\frac{2}{hR}.
\]





\bibliographystyle{siam}
\bibliography{References}

\end{document}